\newcommand{\verbatimfont}[1]{\def\verbatim@font{#1}}%
\newcommand{\bi}{\begin{itemize}}\newcommand{\ei}{\end{itemize}}
\newcommand{\be}{\begin{equation}}\newcommand{\ee}{\end{equation}}
\newcommand{\bee}{\begin{enumerate}}\newcommand{\eee}{\end{enumerate}}
\newcommand{\bea}{\begin{eqnarray}}\newcommand{\eea}{\end{eqnarray}}
\newcommand{\beas}{\begin{eqnarray*}}\newcommand{\eeas}{\end{eqnarray*}}
\newcommand{\bc}{\begin{center}}\newcommand{\ec}{\end{center}}
\newtheorem{lemma}{Lemma}
\newtheorem{theorem}{Theorem}
\newcommand{\field}[1]{\mathbb{#1}}
\newcommand{\R}{\field{R}}
\newcommand{\C}{\field{C}}
\theoremstyle{definition}
\newtheorem{defin}{Definition}
\newenvironment{ex}
  {\pushQED{\qed}\exNoSymbol}
  {\popQED\endexNoSymbol}
\newcommand{\bmtx}{\begin{bmatrix}}
\newcommand{\emtx}{\end{bmatrix}}
\newcommand{\bsmtx}{\left[ \begin{smallmatrix}} 
\newcommand{\esmtx}{\end{smallmatrix} \right]} 
\newcommand{\bmatarray}[1]{\left[\begin{array}{#1}}
\newcommand{\ematarray}{\end{array}\right]}
\newlength\figurewidth 
\newlength\figureheight
\title{An Introduction to Disk Margins} 
\author{Peter Seiler, Andrew Packard, and Pascal Gahinet \\
  POC: P.\ Seiler (pseiler@umich.edu)\\ \today }
\newif\ifPDF \ifx\pdfoutput\undefined\PDFfalse \else\ifnum\pdfoutput > 0\PDFtrue \else\PDFfalse \fi \fi
\begin{document}
\maketitle
\CSMsetup

\noindent
Note: Andrew Packard passed away on September 30, 2019 after a long
battle with cancer. He contributed substantially to this work
including the drafting of the paper.

\vspace{0.1in}

Feedback controllers are designed to ensure stability and achieve a
variety of performance objectives including reference tracking and
disturbance rejection. Control engineers have developed different
types of ``safety factors'' to account for the mismatch between the
plant model used for control design and the dynamics of the real
system.  Classical margins account for this mismatch by introducing
gain and phase perturbations in the feedback. The classical margins are
measures of the gain and phase perturbations that can be tolerated
while retaining closed-loop stability.

This paper first reviews classical margins and discusses several
important factors that must be considered with their use. First, real
systems differ from their mathematical models in both magnitude
\textit{and} phase. These simultaneous perturbations are not captured
by the classical margins which only consider gain \textit{or} phase
perturbations but not both.  Second, a small combination of gain and
phase perturbation may cause instability even if the system has large
gain/phase margins. This can be especially important when using
automated computer-based control design over a rich class of
controllers.  The optimization process may improve both gain and phase
margins while degrading robustness with respect to simultaneous
variations.  Third, margin requirements must account for the increase
in model uncertainty at higher frequencies.  All design models lose
fidelity at high frequencies.  Typical gain/phase margin requirements,
e.g. $\pm 6$dB and $45^o$, are sufficient only if the corresponding
critical frequencies remain within the range where the design model is
relatively accurate.  Fourth, there are alternative robustness margins
that provide more useful extensions to multiple-input, multiple-output
(MIMO) systems. One such extension, discussed later in the paper, is
the ``multi-loop'' disk margin which accounts for separate,
independent gain/phase variation in multiple channels.

The paper next introduces disk margins as a tool for
assessing robust stability of feedback systems. Disk margins
address, to some degree, the issues regarding classical margins as
summarized above.  These margins are defined using a general family of
complex perturbations that account for simultaneous gain and phase
variations.  Each set of perturbations, denoted $D(\alpha,\sigma)$, is a
disk parameterized by a size $\alpha$ and skew $\sigma$.  Given a
skew $\sigma$, the disk margin is the largest size $\alpha$ for
which the closed loop remains stable for all perturbations in
$D(\alpha,\sigma)$.  Theorem~\ref{thm:edm} gives an easily computable
expression for the disk margin.  The expression originates from a
variation of the Small Gain Theorem
\cite{zhou96,dullerud00,skogestad05} and provides a construction for
the ``smallest'' destabilizing complex (gain and phase)
perturbation. This complex perturbation can be interpreted as dynamic,
linear time-invariant (LTI) uncertainty.  This is useful as the
destabilizing LTI perturbation can be incorporated within higher
fidelity nonlinear simulations to gain further insight.
Frequency-dependent disk margins can also be computed which provides
additional insight into potential robustness issues.

The class of disk margins defined using $D(\alpha,\sigma)$ includes several
common cases that appear in the literature. First, they include the
symmetric disk margins introduced in \cite{barrett80} and more
recently discussed in \cite{blight94,bates02}. Second, the general
disk margins include conditions based on the distance from the Nyquist
curve of the loop transfer function to the critical $-1$ point
\cite{smith58,franklin18,falcoz15}.  This is related to an
interpretation of disk margins as exclusion regions in the Nyquist
plane. Third, the general disk margins include conditions based on
multiplicative uncertainty models used in robust control
\cite{skogestad05,zhou96}.

Finally, the paper reviews the use of disk margins for MIMO feedback
systems.  A typical extension of classical margins for MIMO systems is
to assess stability with a gain or phase perturbation introduced in a
single channel.  This ``loop-at-a-time'' analysis fails to capture the
effect of simultaneous perturbations occurring in multiple
channels. Disk margins are extended to account for multiple-loop
perturbations.  This multiple-loop analysis provides an introduction
to more general robustness frameworks, e.g.  structured singular value
$\mu$ \cite{doyle78,safonov80,doyle82,doyle85,packard93,fan91} and
integral quadratic constraints \cite{megretski97}.

\section{Background}

This section reviews background material related to dynamical systems
and single-input, single-output (SISO) classical control. This
material can be found in standard textbooks on classical control
\cite{nise10,franklin18,dorf16,ogata09}.

\subsection{Classical Margins}


Consider the classical feedback system shown in
Figure~\ref{fig:gpmfb}.  The plant $P$ and controller $K$ are both
assumed to be linear time-invariant (LTI) and single-input /
single-output (SISO) systems.  The extension to multiple-input /
multiple-output (MIMO) systems is considered later. Assume the
controller $K$ was designed to stabilize the nominal model $P$.
Because this nominal model is only an approximation for the ``real''
dynamics of the plant, control engineers have developed various types
of safety factors to account for the mismatch between the plant model
$P$ and the dynamics of the real system. One way to account for this
mismatch is to introduce the complex-valued perturbation $f$ in
Figure~\ref{fig:gpmfb}. Let $L:=PK$ denote the nominal loop transfer
function.  The perturbed open-loop response is
$L_f := f L$ and the nominal design corresponds to $f = 1$. As $f$
moves away from 1, the closed-loop poles can transition from the open
left-half plane (stable) into the closed right half plane
(unstable). The classical gain and phase margins measure how far $f$
can deviate from $f=1$ while retaining closed-loop stability.


\begin{figure}[h]
\centering
\tikzstyle{block} = [draw, rectangle, minimum width=1.2cm, 
     minimum height=1.2cm, align=center]
\tikzstyle{sum} = [draw,circle,inner sep=0mm,minimum size=3mm] 
\begin{tikzpicture}[auto,>=stealth',very thick,node distance = 1.5cm]
\node [coordinate, name=input] {};
\node[sum, right = of input](sum0){};
\node[block, right=of sum0](Controller) {$K$};
\node[block, right=of Controller](Gain) {$f$};
\node[block, right=of Gain](Plant) {$P$};
\node[coordinate, right = 0.8cm of Plant] (feedback) {};
\draw[->] (input) -- (sum0.west) node[pos=0.2]{$r$};
\draw[->] (sum0.east) -- (Controller.west)  node[pos=0.4]{$e$};
\draw[->] (Controller.east) -- (Gain.west) node[pos=0.6]{$u$};
\draw[->] (Gain.east) -- (Plant.west);
\draw[->] (Plant.east) --  ++(2cm,0cm) node[pos=0.8]{$y$};
\draw[->] (feedback) --  ++(0cm,-1.7cm) -| (sum0) 
    node[pos=0.8, right] {$-$};
\draw[gray,dashed] (2.9cm,1cm) rectangle (10.3cm,-1cm);
\node at (7cm,1.4cm) {$L_f:=f L$};
\end{tikzpicture} 
\caption{Feedback system including perturbation $f$.}
\label{fig:gpmfb}
\end{figure}
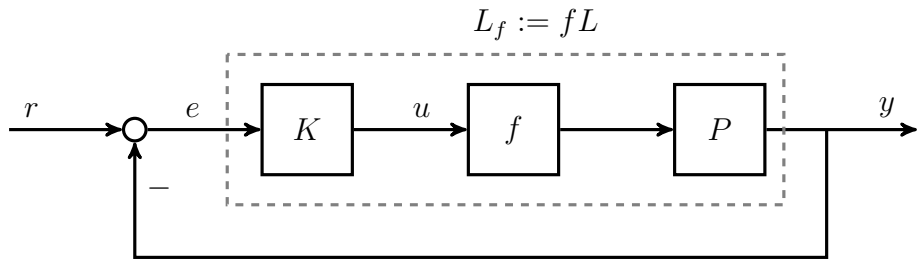

The gain margin measures the amount of allowable perturbation in the
plant gain. This corresponds to real perturbations $f:=g \in \R$.
In other words, the model used for design is $P$ but the real dynamics
might have a different gain as represented by $gP$. It is typically
assumed that the gain of the design model at least has the correct
sign and hence only positive variations $g> 0$ are of interest.  The
gain margin specifies the minimum and maximum variation for which the
closed loop remains stable and well-posed as defined below.

\begin{defin}
  The \textit{gain margins} consist of an upper limit $g_U > 1$ and a
  lower limit $g_L < 1$ such that:
  \begin{enumerate}
  \item the closed-loop is stable and well-posed for all positive gain
    variations $g$ in the range $g_L < g < g_U$,
  \item the closed-loop is unstable or ill-posed for gain variations
    $g = g_U$ (if $g_U<\infty$) and $g = g_L$ (if $g_L>0$).
  \end{enumerate}
  The upper gain margin is $g_U = +\infty$ if the closed-loop remains
  stable and well-posed for all gains $g>1$.  Similarly, the lower
  gain margin is $g_L = 0$ if the closed-loop remains stable and
  well-posed for all positive gains $g < 1$. Reported gain margins are
  often converted to units of decibels, i.e. $20\log_{10}(g)$ where
  $g$ is in actual units.
\end{defin}

The phase margin is the amount of allowable variation in the plant
phase before the closed-loop becomes unstable.  This corresponds to
phase perturbations $f := e^{-j\phi}$ with $\phi \in \R$.  The nominal
loop transfer function is given by $\phi=0$ and $f=1$.  The term phase
variation arises because
$\angle L_{f}(j\omega) = \angle L(j\omega) - \phi$, i.e.  $\phi$
modifies the angle (phase) of the dynamics.  Phase variations can
occur due to time delays in the feedback loop, e.g. due to
implementation on embedded processors, or simply due to deviations in
the plant dynamics.  Sufficient phase margin is required to ensure
that such delays and model variations do not destabilize the system.
It can be shown that the positive and negative phases are equivalent
in a certain sense: $\phi>0$ causes instability if and only if $-\phi$
causes instability. Specifically, the perturbed sensitivity is
$S_\phi(s)=\frac{1}{1+e^{-j\phi}L(s)}$. If $1+fL(j\omega)=0$ for some
perturbation $f=e^{-\phi}$ and frequency $\omega$ then $f$
destabilizes the loop.  Take the complex conjugate of
$1+fL(j\omega)=0$ to show
$1 + \bar{f} \, \overline{L(j\omega)} = 1 + e^{j\phi} L(-j\omega) =
0$.  This implies that $\bar{f} = e^{j\phi}$ also destabilizes the
loop since $1+\bar{f} L(s)$ has a zero at $s=-j\omega$.  

The phase margin specifies the maximum (positive or negative)
variation for which the closed-loop remains stable and well-posed as
defined below.  A related time delay margin can also be defined.


\begin{defin}
  The \textit{phase margin} consists of an upper limit
  $\phi_U \ge 0$ such that:
  \begin{enumerate}
  \item the closed-loop is stable and well-posed for all phase
    variations $\phi$ in the range $-\phi_U < \phi < \phi_U$,
    and
  \item the closed-loop is unstable or ill-posed for
    $\phi = \phi_U$ (if $\phi_U<\infty$).
  \end{enumerate}
  The phase margin is $\phi_U = +\infty$ if the closed-loop remains
  stable and well-posed for all phases $\phi_U>0$. Reported phase
  margins are often converted to units of degrees, i.e.
  $\phi \times \frac{180^o}{\pi}$ where $\phi$ is in radians.  (Note
  that complex numbers repeat with every $360^o=2\pi$ change in
  phase, i.e.  $e^{j\phi} = e^{j\phi+2\pi}$.  The phase margin
  $\phi_U=180^o$ indicates the closed-loop is stable/well-posed for
  $-180^o < \phi < +180^o$ but unstable or ill-posed for
  $\phi=180^o$.  The convention $\phi_U=+\infty$ is equivalent to
  stability for all phases in the range $-180^o$
  $\le \phi \le +180^o$.)
\end{defin}


There is a simple necessary and sufficient condition to compute gain
and phase margins. The nominal closed-loop is assumed to be stable and
hence the poles are in the LHP.  The poles may transition from the LHP
(stable) to the RHP (unstable) due to the gain or phase variation.
The smallest variation that causes the transition from stable to
unstable occurs when a closed-loop pole crosses the imaginary
axis. This occurs when a gain or phase variation places a closed-loop
pole on the imaginary axis at $s=j\omega_0$.  The condition for this
stability transition is: \textit{a gain $f_0=g_0$ or phase
  $f_0=e^{-j\phi_0}$ places a closed-loop pole on the imaginary axis
  at $s=j\omega_0$ if and only if $1+f_0 L(j\omega_0) = 0$.}  This
condition causes the perturbed closed-loop sensitivity
$S_{f_0}:=\frac{1}{1+f_0L}$ to have a pole at $s=j\omega_0$. The gain
margin is the smallest factor $g$ (relative to $g=1$) that puts a
closed-loop pole on the imaginary axis, and similarly for the phase
margin.  This condition can be used to compute gain and/or phase
margins from the Bode plot of the nominal loop $L$. It also suggests a
bisection method to numerically compute the gain and phase
margins. An example is provided next as a brief review of the
classical margins.

\begin{ex}
  \label{ex:cm}


  Consider a feedback system with the following plant $P$, controller $K$,
  ad nominal loop $L$:
  \begin{align}
    P(s) = \frac{1}{s^3+10s^2+10s+10}, \,\,\, 
    K(s)=25, \,\,\,
    L(s) = \frac{25}{s^3+10s^2+10s+10}.
  \end{align}
  The nominal closed-loop has poles in the LHP at $-9.33$ and
  $-0.33 \pm 1.91j$ and hence is stable.  The poles of the closed-loop
  system remain in the LHP for all gain variations $f=g<1$. Hence the
  lower gain margin is $g_L=0$.  However, the closed-loop poles cross
  into the RHP for sufficiently large gains $g>1$. The upper gain
  margin $g_U=3.6$ marks the transition as poles move from the LHP
  (stable) into the RHP (unstable).  The closed-loop is stable for
  $g\in[0,g_U)$. For $g=g_U$ the closed-loop has poles on the
  imaginary axis $s=\pm j \omega_1$ at the critical frequency
  $\omega_1 = 3.16$ rad/sec.  In other words, $1+g_UL(j\omega_1)=0$
  and it can be verified that the perturbed sensitivity
  $S_{g_U} = \frac{1}{1+g_U L}$ has a poles on the imaginary axis at
  $s=\pm j \omega_1$.  The poles of the closed-loop also cross into
  the RHP as the phase increases. The phase margin $\phi_U=29.1^o$
  marks the transition as poles move from the LHP (stable) into the
  RHP (unstable).  The closed-loop is stable for
  $\phi\in (-\phi_U,\phi_U)$. For $\phi=\phi_U$ the closed-loop has
  poles on the imaginary axis $s=\pm j \omega_2$ at the critical
  frequency $\omega_2 = 1.78$ rad/sec. Again, this corresponds to
  $1+e^{-j\phi_U} L(j\omega_2)=0$ and it can be verified that the
  perturbed sensitivity has a poles on the imaginary axis at
  $s=\pm j \omega_2$.
\end{ex}








\subsection{Limitations of Classical Margins}

There are several important factors that must be considered when
using classical margins:
\begin{enumerate}

\item \textit{Real systems differ from their mathematical models in
    both magnitude \underline{and} phase:} The Bode plot in
  Figure~\ref{fig:GMPMLimitations} shows a collection of frequency
  responses obtained from input-output experiments on hard disk drives
  (blue).  A low order model used for control design is also shown
  (yellow). The model accurately represents the experimental data up
  to 2-3rad/sec but the experimental data has both gain and phase
  variations at higher frequencies. These simultaneous perturbations
  are not captured by the classical margins which only consider gain
  \textit{or} phase perturbations but not both.


\begin{figure}[h!]
  \centering
  \includegraphics[width=0.57\textwidth]{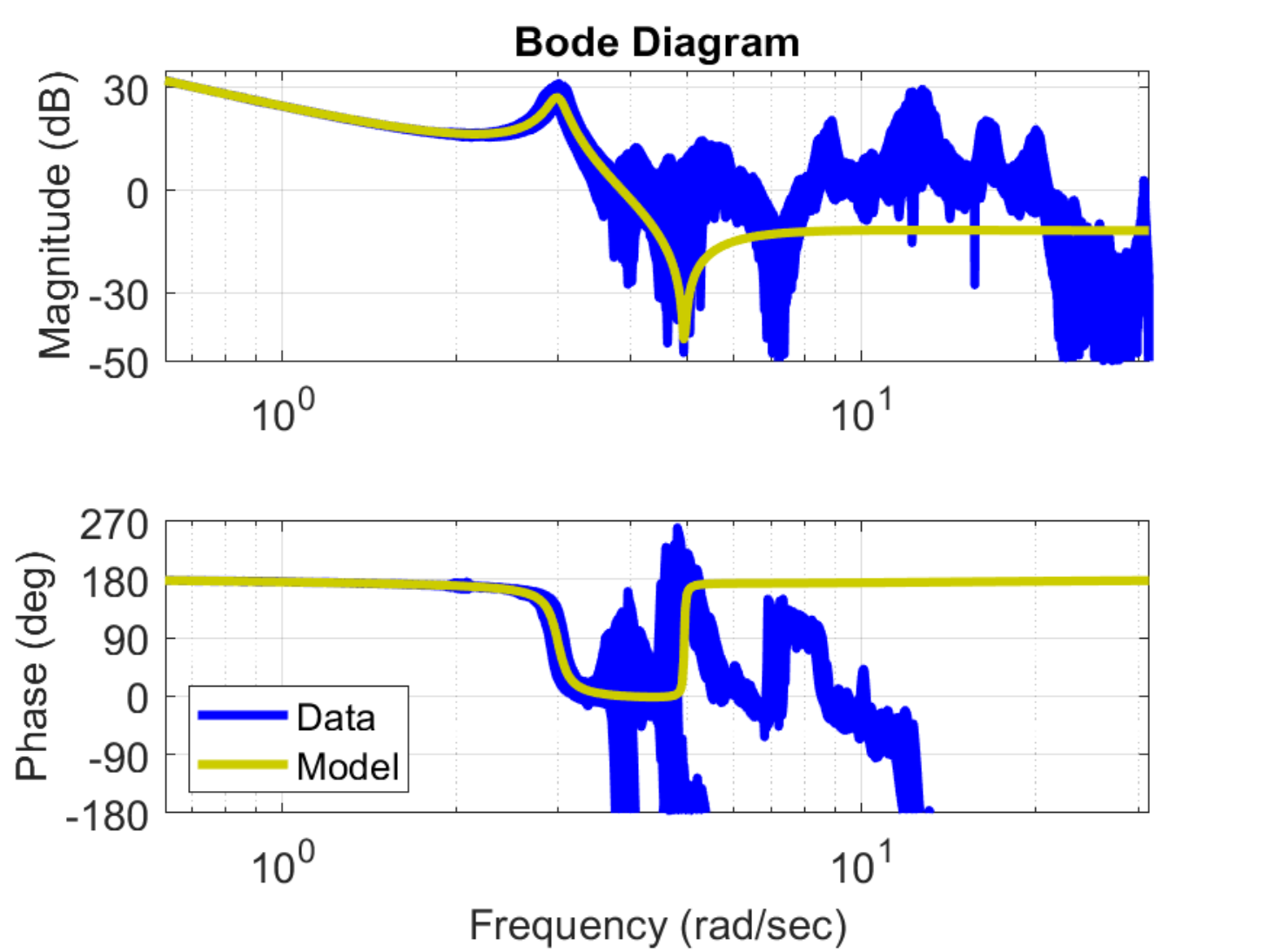}
  \caption{Experimental frequency responses from many hard disk
    drives (blue) and a low-order design model (yellow). 
    This data is provided by Seagate and the frequency axis has been
    normalized for proprietary reasons.}
  \label{fig:GMPMLimitations}
\end{figure}

\item \textit{Small plant perturbations may cause robustness issues
    even if the system has large gain/phase margins:} Real systems
  have simultaneous gain and phase perturbations as noted in the first
  comment. Moreover, there are examples of systems with large gain and
  phase margins but for which a small (combined) gain/phase
  perturbation causes instability. See Section 9.5 of \cite{zhou96}
  for the construction of such an example. An extreme example is given
  by the following loop:

{ \small
  \begin{align}
    \label{eq:badL}
    L(s):=\frac{-47.252 s^7 - 20.234 s^6 - 135.4086 s^5 + 61.6166 s^4 
    + 804.6454 s^3 + 600.0611 s^2 + 59.1451 s + 1.888}{99.8696 s^7 
    + 175.5045 s^6 + 673.7378 s^5 + 890.5109 s^4 + 553.1742 s^3 
    - 49.2268 s^2 + 12.1448 s + 1}.
  \end{align}
} 

Figure~\ref{fig:BadMarginsEx} shows a portion of the Nyquist plot for
this loop. The feedback system with $L$ has phase margin
$\phi_U = 45^o$ and gain margins $[g_L,g_U]=[0.2,2.1]$. The points
corresponding to the phase margin and upper gain margin
($-1/g_U$) are marked with green squares in the figure. The
classical margins are large but the Nyquist curve for $L$ comes near
to the $-1$ point.  Thus small (simultaneous gain and phase)
perturbations can cause the feedback system to become unstable.  The
key point is that some care is required when using classical gain and
phase margins.  This did not present itself as an issue when
controllers were designed primarily with graphical techniques.  These
classical controllers were typically of limited complexity and did not
have enough degrees of freedom to get into this corner. However, this
issue can be especially important when using automated computer-based
control design over a rich class of controllers.  The optimization
process may improve both gain and phase margins while degrading
robustness with respect to simultaneous variations.


 \begin{figure}[h!]
   \centering
   \includegraphics[width=0.5\textwidth]{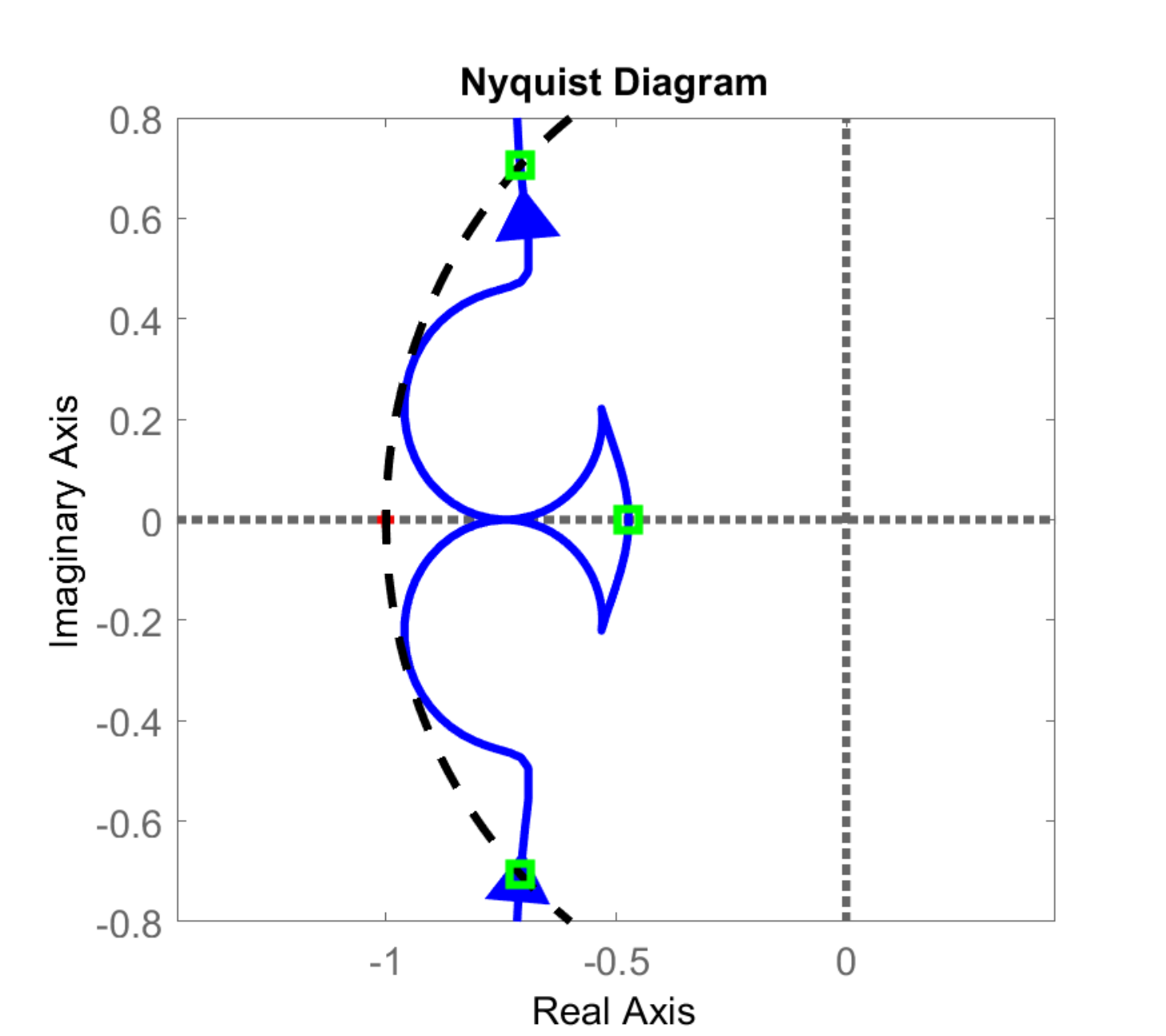}
   \caption{Nyquist plot of the loop $L$ in
     Equation~\ref{eq:badL}. This loop has large classical gain and
     phase margins (green squares) but poor robustness to simultaneous
     gain and phase perturbations.}
   \label{fig:BadMarginsEx}
 \end{figure}

\item \textit{Margin requirements must account for the increase in
    model uncertainty at higher frequencies:} Consider again the hard
  disk drive frequency responses shown in
  Figure~\ref{fig:GMPMLimitations}.  The design model (yellow) loses
  fidelity at high frequencies. As a result, the margins must
  necessarily be larger at higher frequencies to ensure stability.
  Requirements based on simple rules of thumb, e.g. $45^o$ of phase
  margin, are insufficient and must account for the expected level of
  model uncertainty.  For example, the design model for the hard disk
  drive data is relatively accurate at low frequencies.  The typical
  $45^o$ phase margin requirement might be sufficient \textit{if}
  the closed-loop bandwidth remains below 2-3 rad/sec where the design
  model has small perturbations. However, this typical phase margin
  requirement will be insufficient if the closed-loop bandwidth is
  pushed beyond 2-3 rad/sec. 



\item \textit{There are alternative robustness margins that provide
    more useful extensions to MIMO systems:} A typical extension of
  classical margins for MIMO systems is to assess stability with a
  gain or phase perturbation introduced into a single channel. This
  analysis is repeated for each input and output channel.  This
  ``loop-at-a-time'' analysis fails to capture the effect of
  simultaneous perturbations occurring in multiple channels.  Hence it
  can provide an overly optimistic view of robustness. Alternative
  robustness margins are more easily extended to account for
  ``multiple-loop'' perturbations as discussed later in the paper.

\end{enumerate}

\section{SISO Disk Margins}

This section introduces the notion of disk margins for SISO systems as
a tool to address some of the limitations of classical margins.  Disk
margins are robust stability measures that account for simultaneous
gain and phase perturbations.  They also provide additional
information regarding the impact of model uncertainty at various
frequencies.

\subsection{Modeling Gain and Phase Variations}

Gain and phase variations are naturally modeled as a complex-valued
multiplicative factor $f$ acting on the open-loop $L$
yielding a perturbed loop $L_f= f L$.  This factor is nominally
1 and its maximum deviation from $f=1$ quantifies the amount of gain
and phase variation. A family of such models is given by:
\begin{align}
\label{eq:Fab}
f \in  D(\alpha,a,b) = \left\{ \frac{1+a \delta}{1-  b \delta} \, : \, 
\delta \in \C \mbox{ with } |\delta| < \alpha \right\},
\end{align}
where $a,b,\alpha$ are real parameters that define the set of
perturbations. The sets $D(\alpha,a,b)$ contain $f=1$, corresponding
to $\delta=0$, and are delimited by a circle centered on the real axis
(assuming $|b\alpha|<1$). For example, the set $D(\alpha,a,b)$ for
$a=0.4$, $b=0.6$ and $\alpha = 0.75$ is the shaded disk shown in
Figure~\ref{fig:FSet}. Note that the nominal value $f=1$ is not
necessarily at the disk center $c$. The real axis intercepts
$\gamma_{\max}$ and $\gamma_{\min}$ determine the maximum relative
increase and decrease of the gain.  The line from the origin and
tangent to the disk determines the maximum phase variation
$\phi_{\max}$ achieved by any perturbation $f \in D(\alpha,a,b)$.

\begin{figure}[h!]
  \centering
  \includegraphics[width=0.55\textwidth]{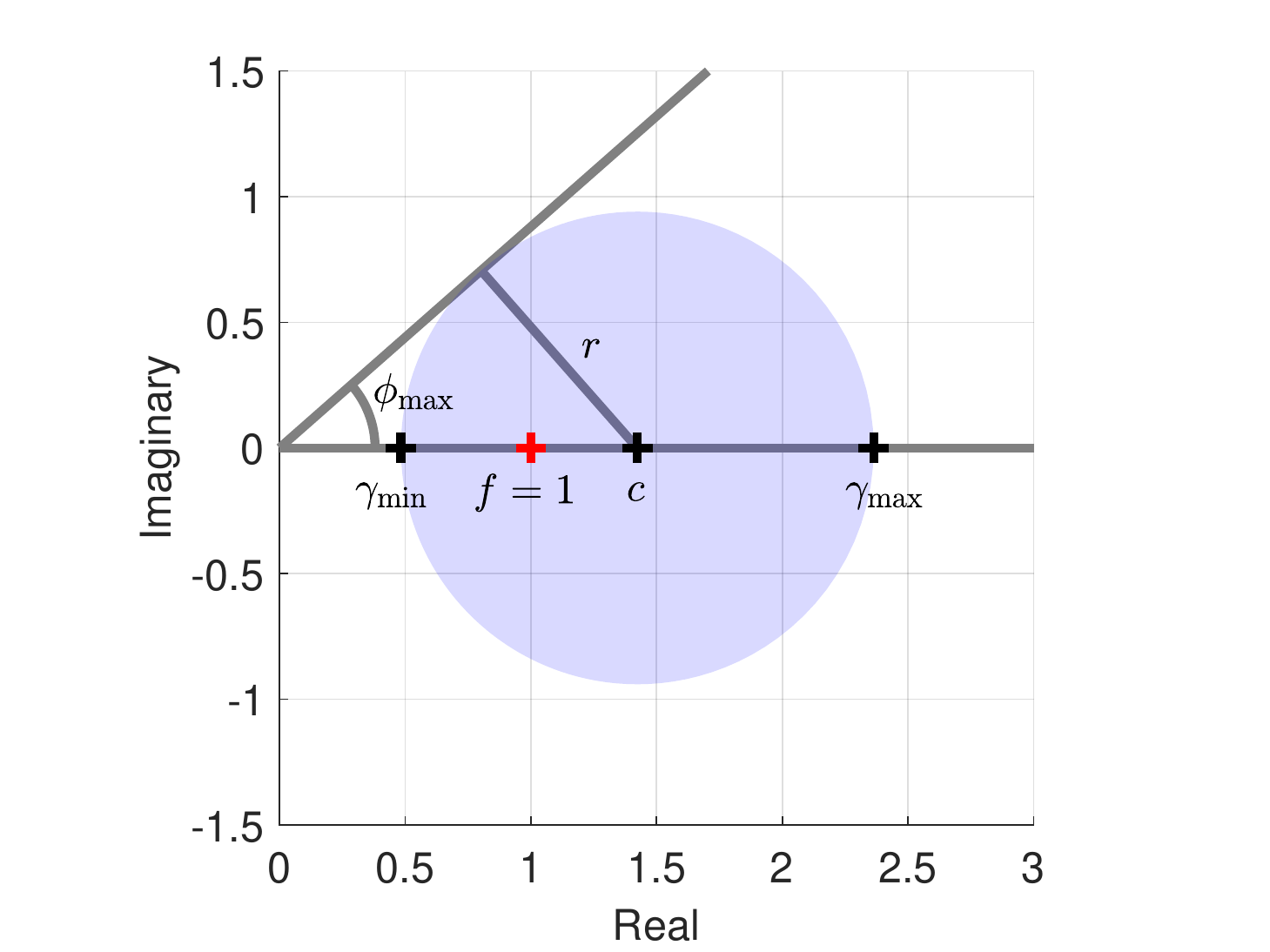}
  \caption{Set of variations $D(\alpha,a,b)$ for $a = 0.4$,
    $b = 0.6$, and $\alpha = 0.75$. This is equivalent to
    $D(\alpha,\sigma)$ for $\sigma=0.2$ and $\alpha=0.75$.}
  \label{fig:FSet}
\end{figure}

There are two issues with the family of models in
Equation~\ref{eq:Fab}.  First, if $a=-b$ then the set only
contains the point $f=1$.  Thus $a+b\ne 0$ is required to avoid this
degenerate case. Second, the set is unchanged when multiplying $a,b$
by some constant and dividing $\alpha$ by the same constant, i.e.
$D\left(\frac{\alpha}{|\kappa|},\kappa a, \kappa b \right) =
D(\alpha,a,b)$ for any $\kappa\ne 0$. This suggests further imposing
$a+b=1$. It is useful to parameterize these constants as
$a := \frac{1}{2}(1-\sigma)$ and $b := \frac{1}{2}(1+\sigma)$ where $\sigma\in\R$ is
a \textit{skew} parameter. This yields the simplified
parameterization:
\begin{align}
\label{eq:Fe}
  f \in  D(\alpha,\sigma) = \left\{ 
  \frac{1+ \frac{1-\sigma}{2} \, \delta }{1- \frac{1+\sigma}{2}
  \, \delta} \, : \, 
  \delta \in \C  \mbox{ with } |\delta| < \alpha \right\}.
\end{align}
Again, the sets $D(\alpha,\sigma)$ are delimited by circles centered on the
real axis (assuming $|\frac{1}{2}(1+\sigma)\alpha|<1$). The disk in
Figure~\ref{fig:FSet} is defined, in this simplified parameterization,
by the choices $\sigma=0.2$ and $\alpha=0.75$.  The intercepts on the real
axis correspond to $\delta=\pm \alpha$ and are given by:
\begin{align}
\label{eq:gminmax}
\gamma_{\min} = \frac{2-\alpha(1-\sigma)}{2+\alpha(1+\sigma)} 
\,  \mbox{ and } \,
\gamma_{\max} = \frac{2+\alpha(1-\sigma)}{2-\alpha(1+\sigma)}.
\end{align}
The disk center and radius are:
\begin{align}
\label{eq:Fcandr}
c = \frac{1}{2} (\gamma_{\min}+\gamma_{\max}) 
\,  \mbox{ and } \,
r = \frac{1}{2} (\gamma_{\max}-\gamma_{\min}). 
\end{align}
The maximum phase variation satisfies $\sin\phi_{max} = \frac{r}{c}$
when $r\le c$.  This follows from the right triangle formed from the
origin, disk center, and point where the tangent line intersects
$D(\alpha,\sigma)$. If $r>c$ then $D(\alpha,\sigma)$ contains the origin
and $\phi_{max}:=+\infty$.


There is some coupling between $\sigma$ and $\alpha$. However, it is
helpful to think of $\alpha$ as controlling the amount of gain and
phase variation while $\sigma$ captures the difference between the
amount of relative gain increase and decrease.  First consider the
case $\sigma=0$. For this choice we have
$\gamma_{\max} = 1/\gamma_{\min}$, i.e.  the maximum gain increase and
decrease are the same in relative terms. We refer to this as the
\emph{balanced} case.  An example of a balanced disk with $\sigma=0$
and $\alpha=\frac{2}{3}$ is shown in both the left and right subplots
of Figure~\ref{fig:eSkew} (blue disk with dashed outline).  The real
axis intercepts $\gamma_{min}=0.5$ and $\gamma_{max}=2$ are balanced
in the sense that they both correspond to changing the gain by a
factor $2$.  The disk moves to the right when increasing $\sigma$ from
the balanced case $\sigma=0$ and adjusting $\alpha$ to keep the radius
constant. This is illustrated in the right subplot of
Figure~\ref{fig:eSkew}.  This means that $\sigma>0$ models a gain
variation that can increase by a larger factor than it can decrease.
Similarly, decreasing $\sigma$ from the balanced case $\sigma=0$ moves
the disk to the left as shown in the left subplot of
Figure~\ref{fig:eSkew}.  This means that the gain can decrease by a
larger factor than it can increase and that it can even change sign.
For $\sigma=-1$, the disk intercepts are $\gamma_{\min} = 1-\alpha$
and $\gamma_{\max} = 1+\alpha$, i.e., the gain can increase or
decrease by the same absolute amount. These examples clarify the
meaning of the term \emph{skew} for the parameter $\sigma$. For
$\sigma=0$, the nominal factor $f=1$ is the geometric mean of the
range $(\gamma_{\min},\gamma_{\max})$ and it moves off-center when
selecting a positive or negative value for $\sigma$.  In summary, a
skew $\sigma=0$ means that the gain can increase or decrease by the
same factor, i.e. it has a symmetric range of variation in dB. A
nonzero skew indicates a bias, on a logarithmic/dB scale, toward gain
decrease ($\sigma<0$) or gain increase ($\sigma>0$).

\begin{figure}[h!]
  \centering
  \includegraphics[width=0.45\textwidth]{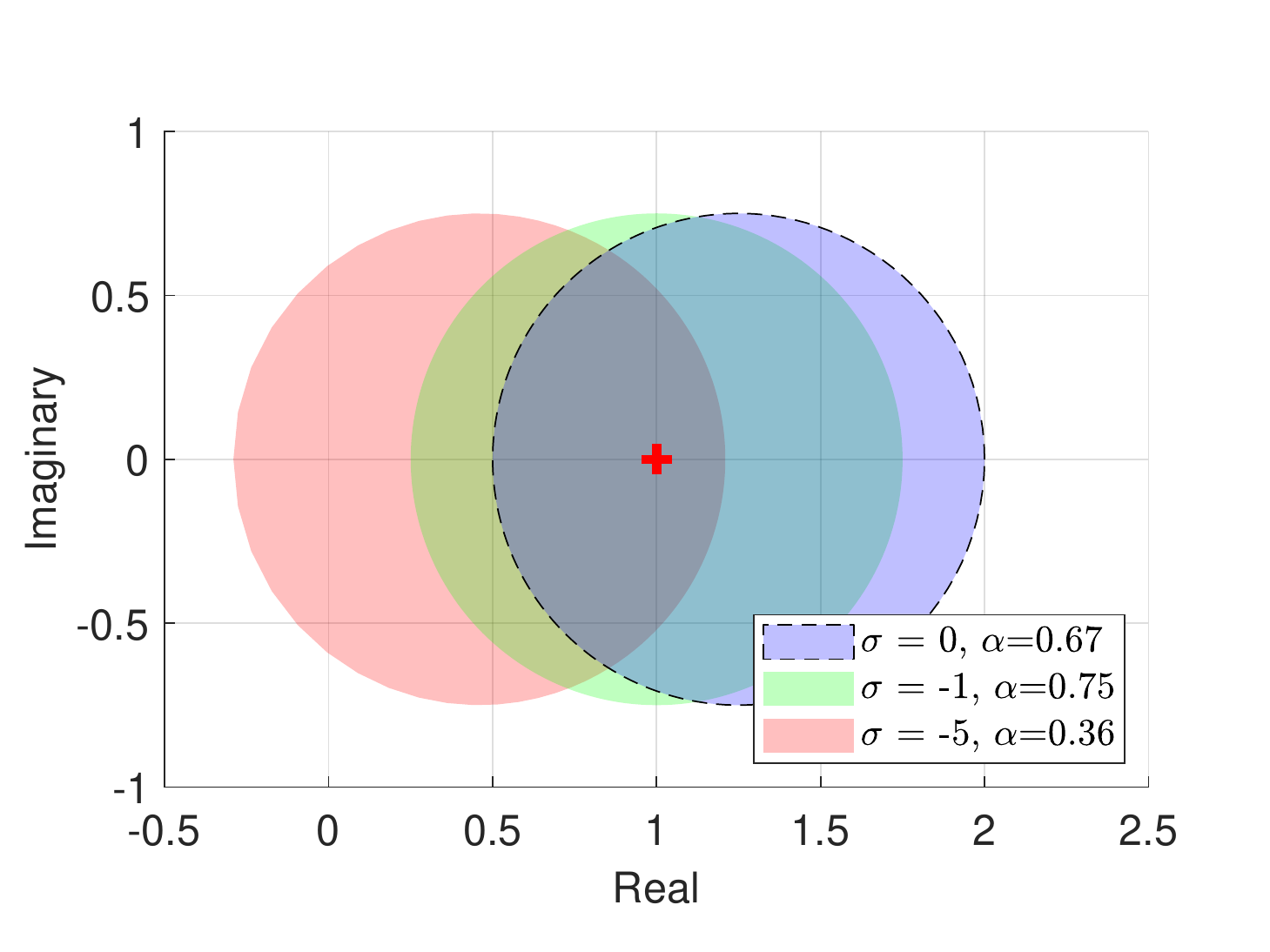}
  \includegraphics[width=0.45\textwidth]{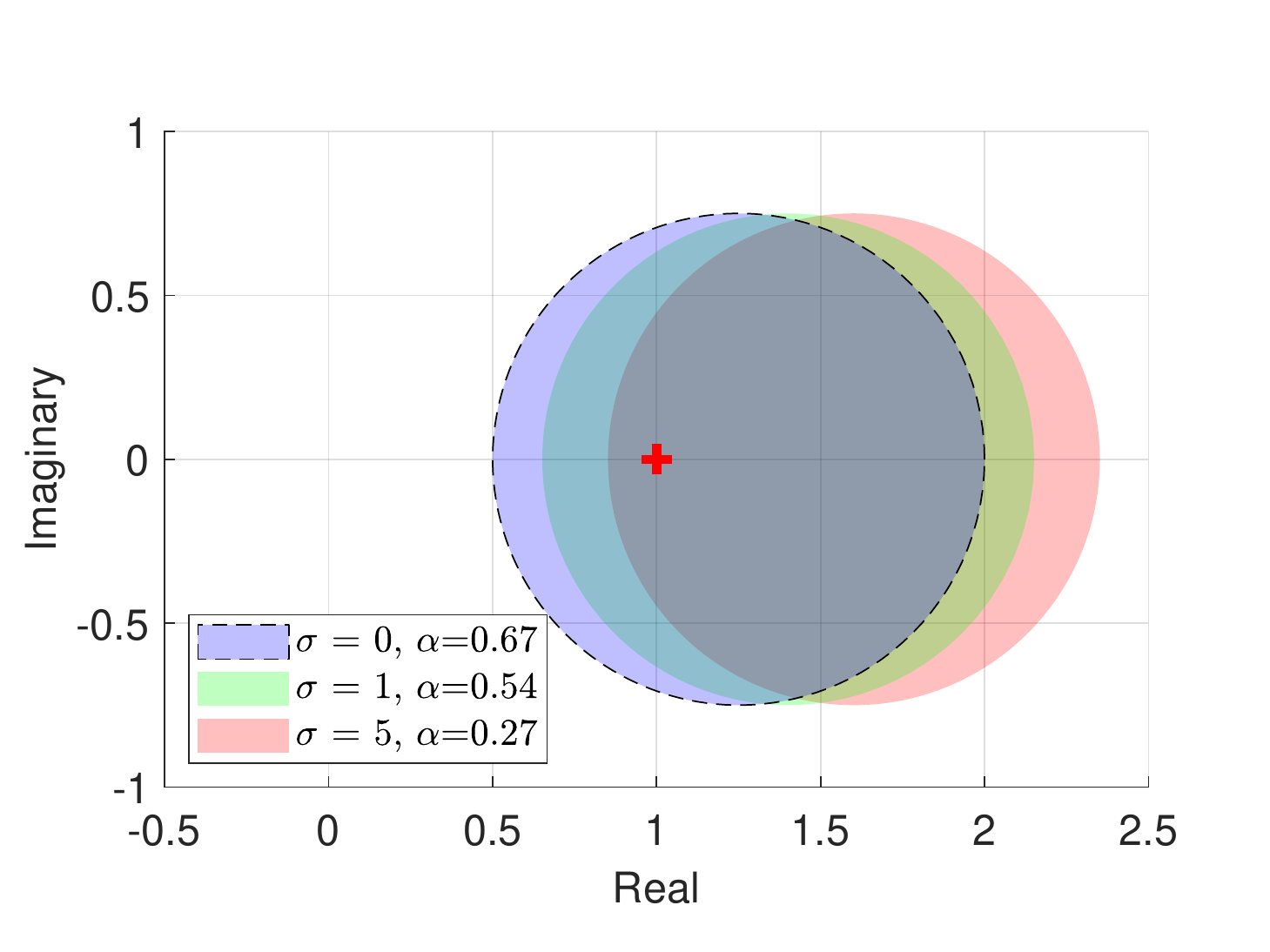}
  \caption{Positive $\sigma$ skews the gain variation right toward more
    gain increase (right).  Negative $\sigma$ skews the gain variation left
    toward more gain decrease (left). The parameter $\alpha$ is
    selected to maintain the same radius for all disks.}
  \label{fig:eSkew}
\end{figure}

For fixed $\sigma$, the parameter $\alpha>0$ controls the size of the
region $D(\alpha,\sigma)$. This is illustrated in
Figure~\ref{fig:alphaSets} for $\sigma=0$. The region is the interior of a
disk for $\alpha < \frac{2}{|1+\sigma|}$.  The size of the disk increases
for larger values of $\alpha$.  The region becomes a half-plane for
$\alpha = \frac{2}{|1+\sigma|}$ and the exterior of a disk for
$\alpha > \frac{2}{|1+\sigma|}$. It can be shown with some algebra that
$\gamma_{max} - \gamma_{min} = 8\alpha / ( 4-\alpha^2(1+\sigma)^2 )$. Thus
if $\alpha > \frac{2}{|1+\sigma|}$ then $\gamma_{max}<\gamma_{min}$,
i.e. $\gamma_{max}$ becomes the ``left'' intercept on the disk.
Equation~\ref{eq:Fcandr} still provides a valid definition for the
disk center $c$ but the disk radius in this less common case is
$r=\frac{1}{2}|\gamma_{max} - \gamma_{min}|$) The case
$\alpha < \frac{2}{|1+\sigma|}$ is most relevant in practice since it
corresponds to the interior of a disk with bounded gain and phase
variations. However, the case $\alpha \geq \frac{2}{|1+\sigma|}$ can be
used to model situations where the gain can vary substantially or the
phase is essentially unknown.  This qualitative analysis provides
guidance on the effect of the parameters $\sigma$ and $\alpha$.


\begin{figure}[h!]
  \centering
  \includegraphics[trim={2.25cm 0 2.25cm 0},clip,width=0.95\textwidth]{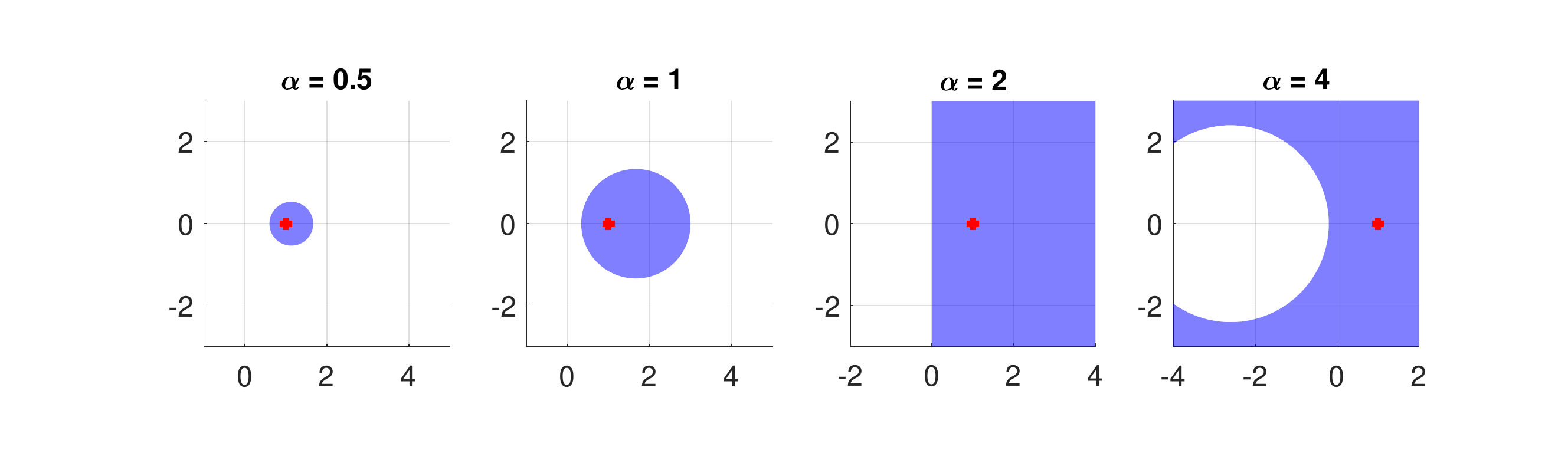}
  \caption{Increasing $\alpha$ increases the size of $D(\alpha,\sigma)$ as
    shown with $\sigma=0$.}
  \label{fig:alphaSets}
\end{figure}

\subsection{Disk Margins: Definition and Computation}

There are two common robustness analyses that can be performed with
the set $D(\alpha,\sigma)$ of gain and phase variations.  The first
approach is to select $\sigma$ and compute the largest value of $\alpha$
for which closed-loop stability is maintained. This yields a stability
margin, formally defined next, that can be used to estimate the
degree of robustness for a feedback loop.
\begin{defin}
  \label{def:Edm}
  For a given skew $\sigma$, the \emph{disk margin} $\alpha_{max}$
  is the largest value of $\alpha$ such that closed-loop with $f L$ is
  well-posed and stable for all complex perturbations $f\in D(\alpha,\sigma)$.
\end{defin}
The set $D(\alpha_{max},\sigma)$ is a stable region for gain and phase
variations, i.e., variations by a factor $f$ inside
$D(\alpha_{max},\sigma)$ cannot destabilize the feedback loop. Note that
the set $D(\alpha,\sigma)$ is not necessarily a disk, as demonstrated in
Figure~\ref{fig:alphaSets}. Hence the term ``disk'', strictly
speaking, refers to the disk $|\delta|<\alpha$.  If little is known
about the distribution of gain variations then $\sigma=0$ is a reasonable
choice as it allows for a gain increase or decrease by the same
relative amount. The choice $\sigma<0$ is justified if the gain can
decrease by a larger factor than it can increase. Similarly, the
choice $\sigma>0$ is justified when the gain can increase by a larger
factor than it can decrease.

An alternative approach is to use $D(\alpha,\sigma)$ to cover known gain
and phase variations, e.g.  neglected actuator or sensor dynamics.
This approach requires some knowledge of the plant modeling errors
specified in terms of gain and phase variations.  Then $\alpha$ and
$\sigma$ are selected to give the smallest set $D(\alpha,\sigma)$ that covers
these known variations.  The goal is then to assess the robustness of
the closed-loop with respect to this set of variations. This second
analysis approach can be performed by computing the disk margin
$\alpha_{max}$ associated with the chosen skew $\sigma$.  If
$\alpha_{max} \ge \alpha$ then the closed-loop is stable for all
variations in $D(\alpha,\sigma)$ and hence the system is robust to the
known modeling errors.

There is a simple expression for the disk margin $\alpha_{max}$. As
with the classical margins, the nominal feedback system is assumed to
be stable and hence the closed-loop poles are in the LHP for
$f=1$. The poles move continuously in the complex plane as
$f \in D(\alpha,\sigma)$ is perturbed away from $f=1$. The poles may move
into the RHP (unstable closed-loop) if $f$ is varied by a sufficiently
large amount from the nominal value $f=1$. The transition from stable
to unstable occurs when the closed-loop poles cross the imaginary
axis.  The condition for this stability transition is: \textit{a
  perturbation $f_0 \in D(\alpha,\sigma)$ places a closed-loop pole on the
  imaginary axis at $s=j\omega_0$ if and only if
  $1+f_0 L(j\omega_0) = 0$.}  (The perturbation $f_0$ is complex and
hence the roots of $1+f_0 L(j\omega_0)=0$ are not necessarily complex
conjugate pairs. However, the disk $D(\alpha,\sigma)$ has conjugate
symmetry.  As a result, if $f_0 \in D(\alpha,\sigma)$ causes a pole
at $s=j\omega_0$ then $\bar f_0 \in D(\alpha,\sigma)$ causes a pole
at $s=-j\omega_0$.)

The definition of $D(\alpha,\sigma)$ (Equation~\ref{eq:Fe}) implies that
$f_0=\frac{2+ (1-\sigma)\delta_0 }{2- (1+\sigma)\delta_0}$ for some
$\delta_0\in \C$ with $|\delta_0|<\alpha$. Thus the stability
transition condition can be re-written, after some algebra, in terms
of the sensitivity $S:=\frac{1}{1+L}$ as follows:
\begin{align}
  \label{eq:SdmCondition}
  \left( S(j\omega_0) + \frac{\sigma-1}{2} \right) \delta_0 = 1.
\end{align}
To summarize, some $f_0 \in D(\alpha,\sigma)$ causes a closed-loop pole at
$s=j\omega_0$ if and only if
$\left( S(j\omega_0) + \frac{\sigma-1}{2} \right) \delta_0 = 1$ holds for
some $|\delta_0| < \alpha$. This condition forms the basis for the
next theorem regarding the disk margin.  The theorem uses the
following notation for the peak (largest value) gain of a stable, SISO,
LTI system $G$:
\begin{align}
  \label{eq:HinfSISO}
  \|G\|_\infty := \max_{\omega \in \R \cup \{+\infty\}}  |G(j\omega)|.
\end{align}
This is called the $H_\infty$ norm for the stable system $G$ and it
corresponds to the largest gain on the Bode magnitude plot.
\begin{theorem}
\label{thm:edm}
  Let $\sigma$ be a given skew parameter defining the disk margin.
  Assume the closed-loop is well-posed and stable with the nominal,
  SISO loop $L$. Then the disk margin is given by:
\begin{align}
  \label{eq:alphadm}
  \alpha_{max} = \frac{1}{\left\| S + \frac{\sigma-1}{2} \right\|_\infty}.
\end{align}
\end{theorem}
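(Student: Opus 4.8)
The plan is to build directly on the stability‑transition condition in Equation~\ref{eq:SdmCondition}, which the excerpt has already established, and to combine it with a continuity‑of‑poles argument. First I would invoke the standard fact that, since the nominal loop ($f=1$) is well‑posed and stable, the closed loop with $fL$ can lose stability, as $f$ ranges over $D(\alpha,\sigma)$, only when a closed‑loop pole crosses the imaginary axis---or when the loop becomes ill‑posed, which I treat as the ``$\omega=+\infty$'' instance of the same condition. Since $D(\alpha,\sigma)$ is the image of the open disk $\{\delta\in\C:|\delta|<\alpha\}$ under a M\"obius map, it is path‑connected and contains $f=1$; hence $f=1$ can be joined to any $f\in D(\alpha,\sigma)$ by a path along which the closed‑loop poles move continuously, and if no such path ever puts a pole on the imaginary axis then the closed loop is well‑posed and stable for every $f\in D(\alpha,\sigma)$.

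Next I would reduce ``a pole on the imaginary axis at $s=j\omega_0$'' to a condition on $\delta$. By Equation~\ref{eq:SdmCondition}, some $f_0\in D(\alpha,\sigma)$ places a closed‑loop pole at $s=j\omega_0$ if and only if $\bigl(S(j\omega_0)+\tfrac{\sigma-1}{2}\bigr)\delta_0=1$ for some $\delta_0$ with $|\delta_0|<\alpha$. For fixed $\omega_0$ this is a single scalar linear equation in $\delta_0$: if $S(j\omega_0)+\tfrac{\sigma-1}{2}=0$ it has no solution (no destabilization at that frequency), and otherwise its unique solution has modulus $|\delta_0|=\bigl|S(j\omega_0)+\tfrac{\sigma-1}{2}\bigr|^{-1}$. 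Thus a destabilizing $f_0$ with critical frequency $\omega_0$ lies in $D(\alpha,\sigma)$ precisely when $\alpha>\bigl|S(j\omega_0)+\tfrac{\sigma-1}{2}\bigr|^{-1}$.

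Then I would sweep $\omega_0$ over $\R\cup\{+\infty\}$: a destabilizing $f_0\in D(\alpha,\sigma)$ exists for some critical frequency if and only if
\[
  \alpha \;>\; \inf_{\omega}\frac{1}{\bigl|S(j\omega)+\tfrac{\sigma-1}{2}\bigr|}
  \;=\; \frac{1}{\left\| S+\tfrac{\sigma-1}{2}\right\|_\infty},
\]
where finiteness of the $H_\infty$ norm and attainment of the supremum on the compact extended frequency axis $[0,+\infty]$ follow from stability of $S+\tfrac{\sigma-1}{2}$ (a consequence of the assumed stability of $S=\tfrac{1}{1+L}$). Contrapositively, the closed loop with $fL$ is well‑posed and stable for \emph{all} $f\in D(\alpha,\sigma)$ exactly when $\alpha\le\left\| S+\tfrac{\sigma-1}{2}\right\|_\infty^{-1}$, so the largest admissible $\alpha$---the disk margin---is $\alpha_{max}=\left\| S+\tfrac{\sigma-1}{2}\right\|_\infty^{-1}$, which is Equation~\ref{eq:alphadm}.

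The step I expect to be the main obstacle is the boundary case $\alpha=\alpha_{max}$. Because $D(\alpha,\sigma)$ is defined with the \emph{strict} bound $|\delta|<\alpha$, at $\alpha=\alpha_{max}$ every admissible $\delta$ gives $\bigl|\bigl(S(j\omega)+\tfrac{\sigma-1}{2}\bigr)\delta\bigr| < \left\| S+\tfrac{\sigma-1}{2}\right\|_\infty\alpha_{max}=1$ for all $\omega\in\R\cup\{+\infty\}$, so the transition condition genuinely fails and both stability and well‑posedness survive \emph{at} the margin; this is what makes $\alpha_{max}$ an attained maximum rather than only a supremum. A secondary detail to pin down is that the pole‑counting argument needs the loop to stay well‑posed along the deformation path, which is automatic here because ill‑posedness is exactly the $\omega=+\infty$ instance already excluded by the norm bound.
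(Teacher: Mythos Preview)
Your proposal is correct and follows essentially the same route as the paper's proof: both invoke the stability-transition condition (Equation~\ref{eq:SdmCondition}, formalized in the appendix as Lemma~\ref{lem:stabtrans}) to characterize imaginary-axis poles, then use a continuity/homotopy argument over the connected set $D(\alpha,\sigma)$ to conclude that no pole can migrate into the RHP, and finally optimize over $\omega$ to identify $\alpha_{max}=\|S+\tfrac{\sigma-1}{2}\|_\infty^{-1}$. Your treatment of the strict boundary case $\alpha=\alpha_{max}$ and of ill-posedness as the $\omega=+\infty$ instance is a bit more explicit than the paper's, but the substance is the same.
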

\begin{proof}
  A formal proof is given in the appendix entitled''Proof of Disk
  Margin Condition''.  Briefly, consider any
  $f_0\in D(\alpha_{max},\sigma)$ with corresponding
  $|\delta_0|<\alpha_{max}$.  Equation~\ref{eq:alphadm} implies the
  the inequality:
  $\left| S(j\omega) + \frac{\sigma-1}{2} \right| \cdot |\delta_0| < 1$ for
  all $\omega$.  This further implies that
  $\left( S(j\omega) + \frac{\sigma-1}{2} \right) \delta_0 \ne 1$ and
  hence, based on the discussion above, the poles cannot lie on the
  imaginary axis.  Finally, the poles are in the LHP for the nominal
  value $f=1$ and, as just shown, cannot cross the imaginary axis for
  any $f_0 \in D(\alpha_{max},\sigma)$.  Therefore the closed-loop remains
  stable for all $f_0 \in D(\alpha_{max},\sigma)$. The formal proof also
  shows that there is a perturbation $f_0$ on the boundary of
  $D(\alpha_{max},\sigma)$ that causes instability. Hence $\alpha_{max}$
  given in Equation~\ref{eq:alphadm} defines the largest possible
  stable region.
\end{proof}

The margin $\alpha_{max}$ decreases as $\|S+\frac{\sigma-1}{2}\|_\infty$
increases, i.e.  large peak gains of $S+\frac{\sigma-1}{2}$ correspond to
small robustness margins.  Several special cases are often considered
in the literature.  The disk margin condition for the balanced case
($\sigma=0$) can be expressed as
$\alpha_{max} = \| \frac{1}{2}(S-T) \|_{\infty}^{-1}$. This is known
as the \emph{symmetric disk margin} \cite{barrett80,blight94,bates02}
because the disks $D(\alpha_{max},\sigma=0)$ are balanced in terms of the
relative gain increase and decrease.  If $\sigma=-1$ or $\sigma=+1$ then the disk
margin condition simplifies to $\alpha_{max}= \|T\|_{\infty}^{-1}$ and
$\alpha_{max}=\|S\|_{\infty}^{-1}$, respectively.  These special cases
are called $T$-based and $S$-based disk margins.

Efficient algorithms are available to compute both peak gain of an LTI
system and the corresponding peak frequency
\cite{boyd89,bruinsma90}. These can be used to compute
$\|S+\frac{\sigma-1}{2}\|_\infty$ and thus the disk margin.  The formal
proof of Theorem~\ref{thm:edm} also provides an explicit construction
for a destabilizing perturbation $f_0$ on the boundary of
$D(\alpha_{max},\sigma)$. First, compute the frequency $\omega_0$ where
$S+\frac{\sigma-1}{2}$ achieves its peak gain.  Next, evaluate the
frequency response of $S(j\omega_0)$ and define
$\delta_0:=\left( S(j\omega_0)+\frac{\sigma-1}{2} \right)^{-1}$. The
corresponding perturbation
$f_0=\frac{2+ (1-\sigma)\delta_0 }{2- (1+\sigma)\delta_0}$ causes the
closed-loop to be unstable (if $\omega_0$ finite) with a pole on the
imaginary axis at $s=j\omega_0$ or ill-posed (if $\omega_0=\infty$).
If this construction yields $\delta_0 = \frac{2}{\sigma+1}$ then
$f_0=\infty$. This occurs when $S(j\omega_0)=1$ and
$L(j\omega_0) = 0$. This corresponds to the trivial case where
$D(\alpha_{max},\sigma)$ is a half-space
($\alpha_{max} = \frac{2}{|1+\sigma|}$) and the closed-loop
retains stability for any perturbation in this half space.

\begin{ex}
  \label{ex:edm}
  Consider again the loop $L(s)=\frac{25}{s^3+10s^2+10s+10}$
  introduced previously in Example~\ref{ex:cm}.  The feedback system
  with this loop is nominally stable. By Theorem~\ref{thm:edm}, the
  symmetric disk margin for $\sigma=0$ is given by
  $\alpha_{max} = \|\frac{1}{2} \, (S-T) \|_\infty^{-1}$.  The peak
  gain of $\frac{1}{2}(S-T)$ is 2.18 at the critical frequency
  $\omega_0=1.94$ rad/sec. This yields a symmetric disk margin of
  $\alpha_{max} = 0.46$.  The corresponding symmetric disk
  $D(\alpha_{max},\sigma=0)$ has real axis intercepts at
  $\gamma_{min}=0.63$ and $\gamma_{max} = 1.59$.  The closed-loop is
  stable for all gain and phase perturbations in the interior of this
  disk. However, there is a destabilizing perturbation on the boundary
  of $D(\alpha_{max},\sigma=0)$. The construction above yields
  $\delta_0 = 0.212 - 0.406j$ and the destabilizing perturbation
  $f_0 = 1.128-0.483j$.  The closed-loop with this perturbation is
  unstable with a pole at $s=j\omega_0$. Figure~\ref{fig:SBodemag}
  shows the closed-loop sensitivities for the nominal $f=1$ (blue
  solid) and destabilizing perturbation $f_0$ (red dashed).  The
  perturbed sensitivity has infinite gain at the critical frequency
  $\omega_0$ due to the imaginary axis pole.
\end{ex}

\begin{figure}[h!]
  \centering
  \includegraphics[width=0.5\textwidth]{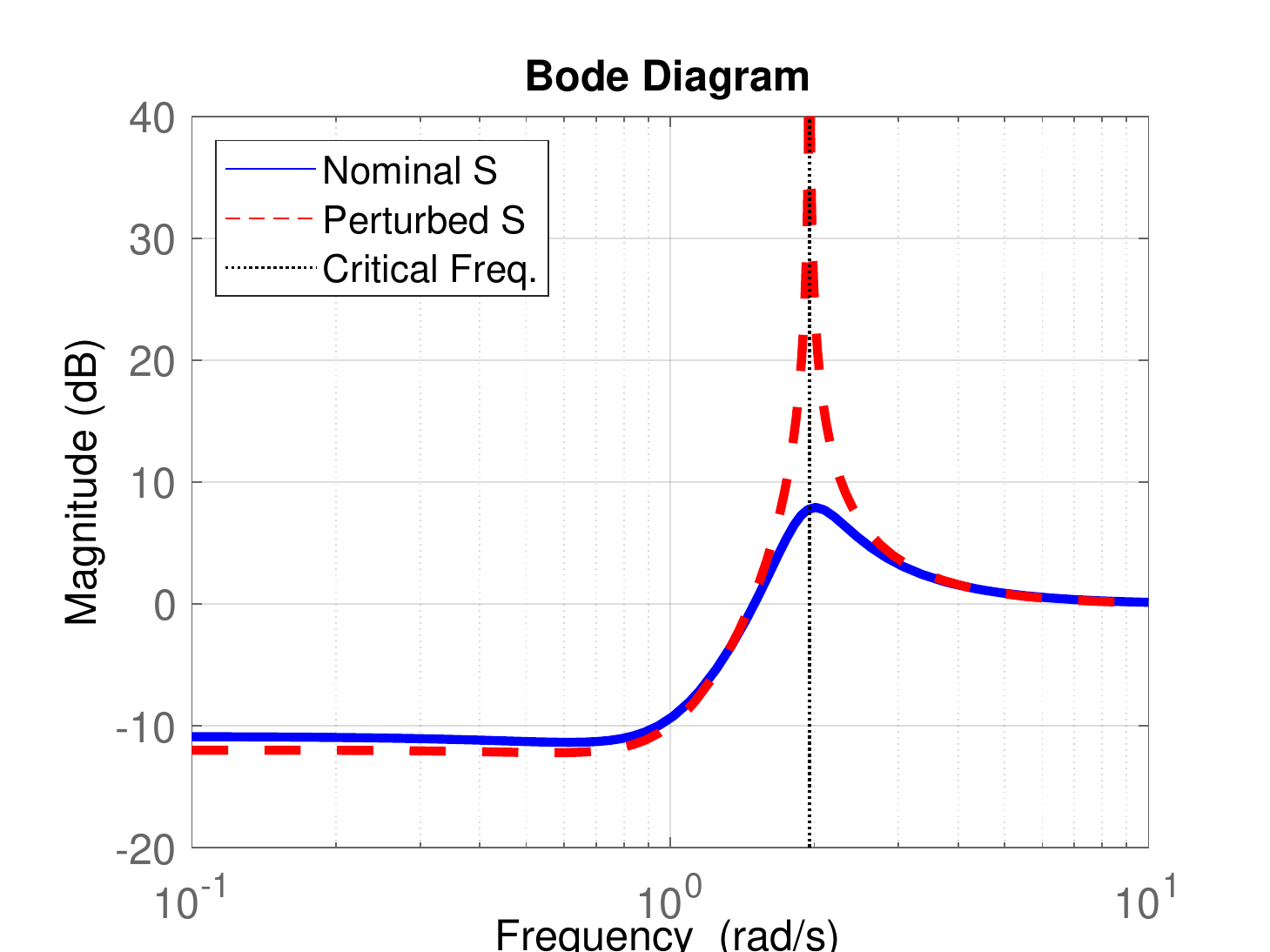}
  \caption{Bode magnitude plot of sensitivities for nominal $f=1$ and 
    destabilizing perturbation $f_0=1.128-0.483j$.}
  \label{fig:SBodemag}
\end{figure}

The destabilizing perturbation $f_0$ is a complex number with
simultaneous gain and phase variation. This critical perturbation
causes an instability with closed-loop pole on the imaginary axis at
the critical frequency $\omega_0$.  This complex perturbation $f_0$
can be equivalently represented as an LTI system with real
coefficients.  Specifically, there is a stable, LTI system $\hat f_0$
such that: (i) $\hat f_0(j\omega_0)=f_0$, and (ii) $\hat f_0(j\omega)$
remains within $D(\alpha_{max},\sigma)$ for all $\omega$. This LTI
perturbation $\hat f_0$ can be used within higher fidelity nonlinear
simulations to gain further insight.  Details on this LTI construction
are provided in the appendix entitled ``Linear Time Invariant (LTI)
Perturbations''.

\subsection{Connections to Gain and Phase Margins}

Disk margins are related to the classical notion of gain and phase
margins but provide a more comprehensive assessment of robust
stability. In particular, the uncertainty model $D(\alpha,\sigma)$ accounts
for simultaneous changes in gain and phase, whereas the classical
margins only consider variations in either gain or phase.  The disk
margin framework models gain and phase variations as a
multiplicative factor $f$ taking values in $D(\alpha,\sigma)$.
Perturbations on the unit circle ($|f|=1$) correspond to phase-only
variations while perturbations on the real axis ($f\in \R$) correspond
to gain-only variations.  The disk margin $\alpha_{max}$ can be used
to compute guaranteed gain and phase margins, denoted
$(\gamma_{min},\gamma_{max})$ and $\phi_m$ as shown in
Figure~\ref{fig:ConnectionToGMPM}.  Recall that closed-loop stability
is maintained for all $f$ in the open set $D(\alpha_{max},\sigma)$.  In
particular, the closed-loop is stable for the portions of the unit
circle and real axis that intersect the disk
$D(\alpha_{max},\sigma)$. This provides lower estimates
$(\gamma_{\min},\gamma_{\max})$ and $(-\phi_m,\phi_m)$ for the 
admissible classical gain-only and phase-only variations.  The
real-axis intercepts correspond to $\delta=\pm \alpha_{max}$ and are
given by:
\begin{align}
\label{eq:galphamax}
\gamma_{\min} = \frac{2-\alpha_{\max}(1-\sigma)}{2+\alpha_{\max}(1+\sigma)} 
\,\,\, \mbox{ and } \,\,\,
\gamma_{\max} = \frac{2+\alpha_{\max}(1-\sigma)}{2-\alpha_{\max}(1+\sigma)}.
\end{align}
To determine $\phi_m$, note that the unit circle intersects the
boundary of $D(\alpha_{max},\sigma)$ at $\cos\phi_m+j\sin\phi_m$.  Consider
the (possibly oblique) triangle formed by this intersection point, the
origin, and the center $c$ of $D(\alpha_{max},\sigma)$.  Apply the law of
cosines to this triangle to obtain
$r^2 = 1 + c^2 - 2 c\cos\phi_m$. This yields the following
expression for $\phi_m$:
\begin{align}
\label{eq:cosphim}
  \cos\phi_m = \frac{ 1+ c^2-r^2 }{2c} 
    = \frac{ 1+ \gamma_{min}\gamma_{max}}{ \gamma_{min} + \gamma_{max} }.
\end{align}
If $D(\alpha_{max},\sigma)$ fails to intersect the unit circle, e.g.
$D(\alpha_{max},\sigma)$ entirely contains the unit disk, then the right
side of Equation~\ref{eq:cosphim} will have magnitude greater than 1.
In such cases $\phi_m := +\infty$ and the feedback system is stable for
any phase variation.


\begin{figure}[h!]
  \centering
  \includegraphics[width=0.5\textwidth]{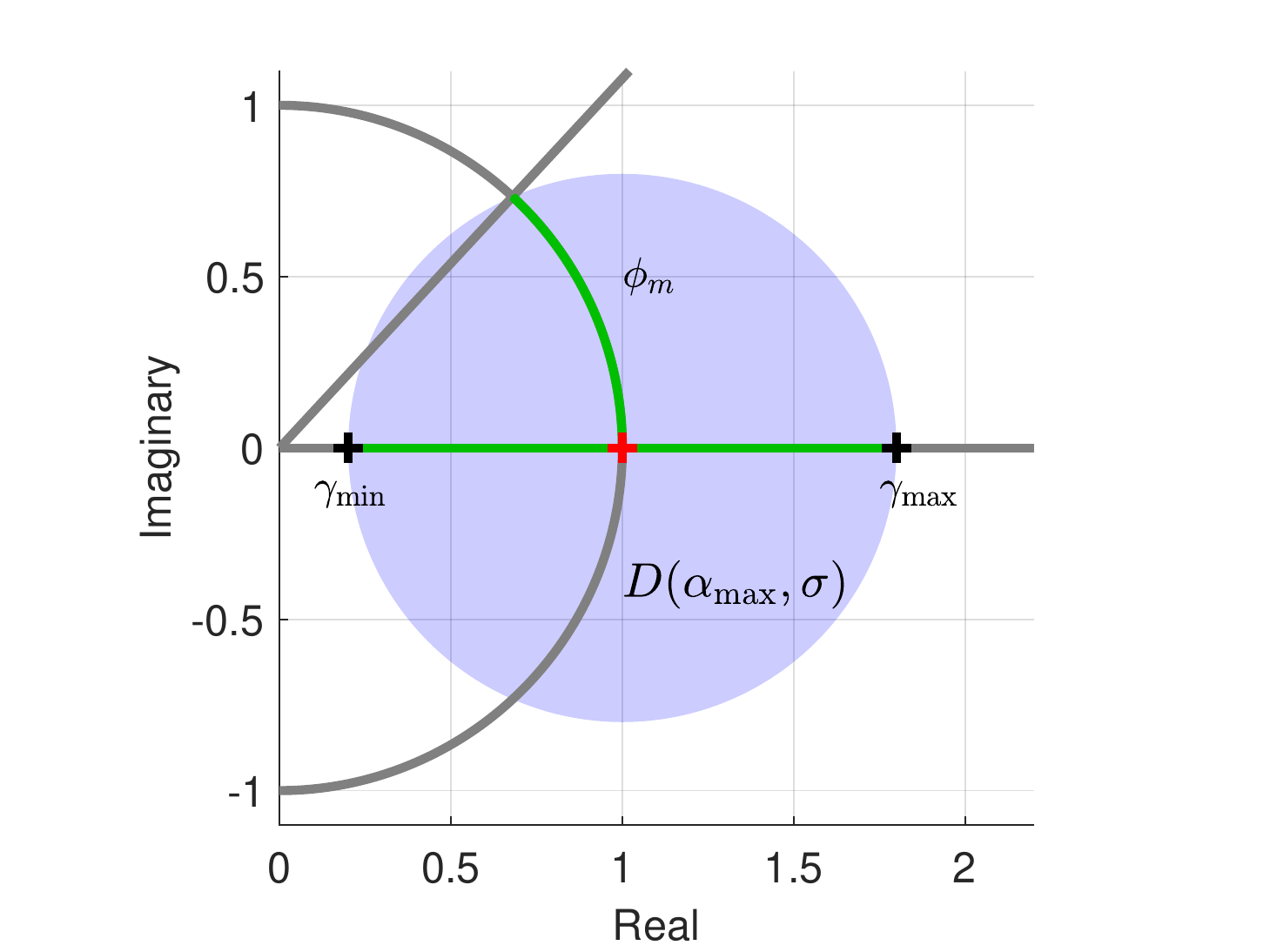}
  \caption{Guaranteed gain and phase margins from largest disk
    $D(\alpha,\sigma)$ maintaining stability.}
  \label{fig:ConnectionToGMPM}
\end{figure}

Note that $(\gamma_{\min},\gamma_{\max})$ and $(-\phi_m,\phi_m)$ are
safe levels of gain-only and phase-only variations. Each value of $\sigma$
yields a new pair of such estimates, and we can vary $\sigma$ to refine
these estimates.  This is of limited practical value, however, since
we can directly compute the classical margins and varying $\sigma$ amounts
to making assumptions on the gain variations that may not hold for the
real system. More importantly, the disk margins can be used to
quantify the effect of combined gain and phase variations that occur
in any real feedback loop. This can again be done using simple
geometry. First consider a given level $\gamma$ of gain variation as
shown in the left plot of Figure~\ref{fig:GainPhaseGeometry}. The
intercepts of the line $y = x\tan\phi$ with the bounding circle of
$D(\alpha_{max},\sigma)$ determine the safe range $(-\phi,\phi)$ for phase
variations concurrent with the gain $\gamma$.  By the law of cosines,
the value of $\phi$ satisfies
$r^2 = \gamma^2 + c^2 - 2 \gamma c \cos\phi$. This can be
equivalently expressed as:
\begin{align}
\label{eq:gammaquadratic}
\gamma^2 - \gamma (\gamma_{\min} + \gamma_{\max}) \cos \phi  +
\gamma_{\min} \gamma_{\max} = 0. 
\end{align}
This expression with gain level $\gamma=1$ simplifies to the previous
relation for $\phi_m$ (Equation~\ref{eq:cosphim}).  Next consider a
given level $\phi$ of phase variation as shown in the right plot of
Figure~\ref{fig:GainPhaseGeometry}. The intercepts of the line
$y = x \tan\phi $ with the bounding circle of $D(\alpha_{max},\sigma)$
determine the safe range $(\gamma^-,\gamma^+)$ for concurrent gain
variations. Again by the law of cosines, the values $\gamma^-$ and
$\gamma^+$ are the roots of Equation~\ref{eq:gammaquadratic} with
the phase variation $\phi$ given.

\begin{figure}[h!]
  \centering
  \includegraphics[trim={1.5cm 0 1.5cm 0},clip,width=0.42\textwidth]{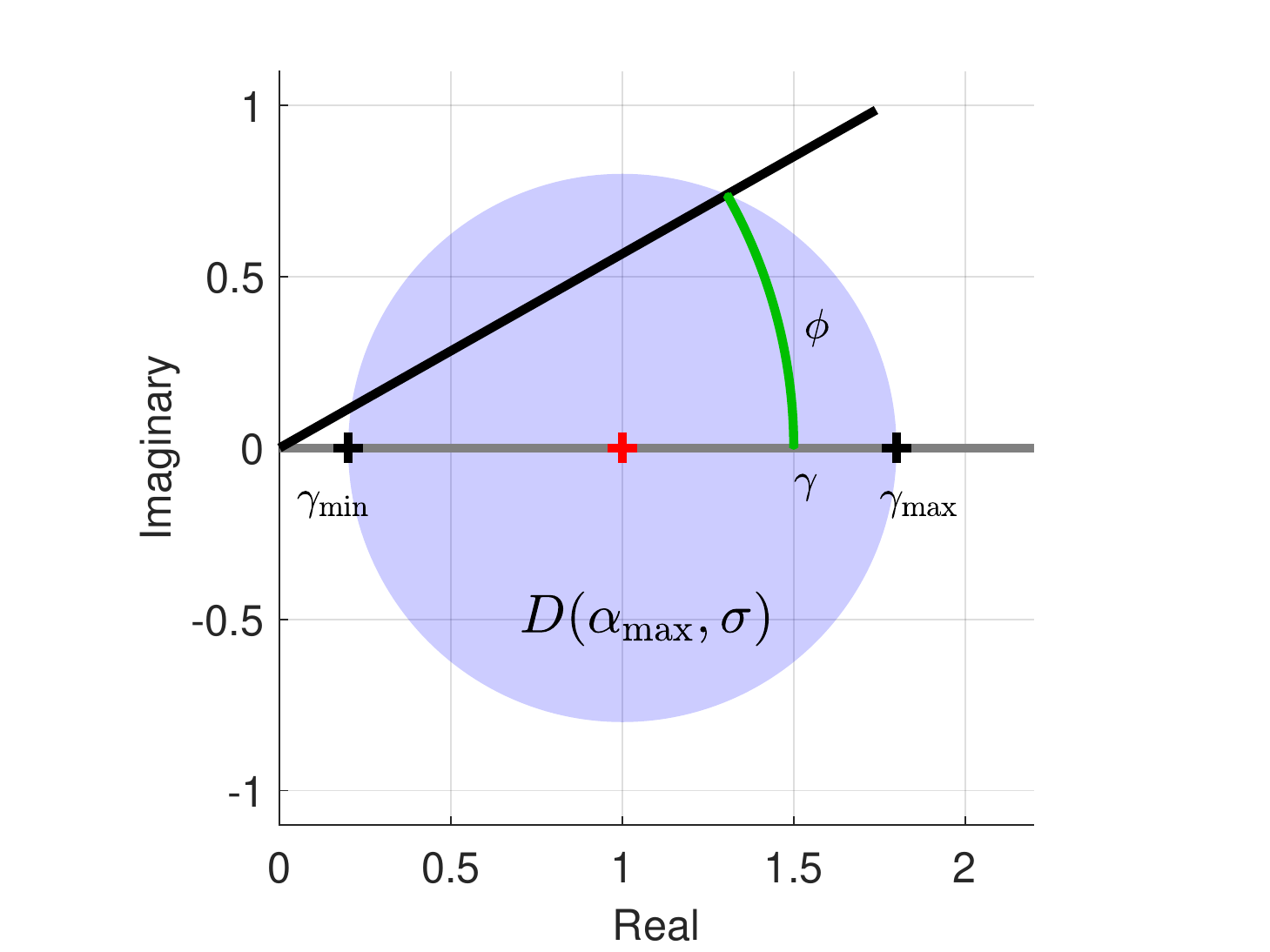}
  \includegraphics[trim={1.5cm 0 1.5cm 0},clip,width=0.42\textwidth]{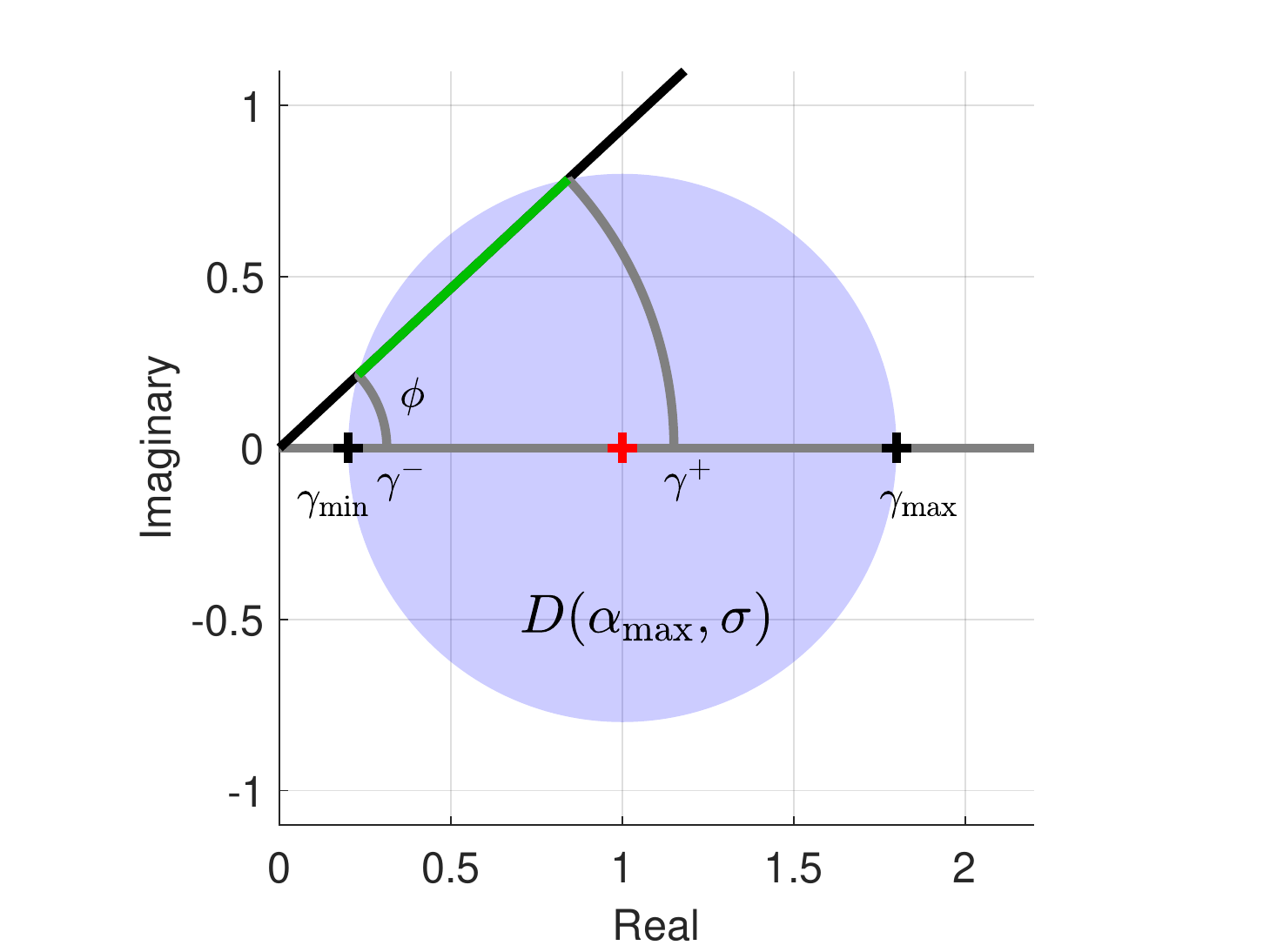}
  \caption{Geometry of admissible phase variations for a given gain
    variation $\gamma$ (Left) and admissible gain variations for a
    given phase variation $\phi$ (Right). }
  \label{fig:GainPhaseGeometry}
\end{figure}

The locus of $(\gamma,\phi)$ solutions delimits the ``safe''
variations as shown in Figure~\ref{fig:SafeGainPhase} in units of
(dB,degrees). The same bounding curve is obtained from the perturbations
$f$ corresponding to $\delta=\alpha_{max}e^{j\theta}$ with
$\theta \in [0,\pi]$. This parameterizes the bounding curve as
$(\gamma,\phi) = (|f| , {\rm angle}(f))$ with
\begin{align}
f = \frac{2+(1-\sigma) \alpha_{max} e^{j\theta}}{2-(1+\sigma) \alpha_{max} e^{j\theta}}
, \;\;\; \theta \in [0,\pi].
\end{align}
The classical gain-only and phase-only margin estimates correspond to
the boundary points $(0,\phi_m)$ and
$(20 \log_{10} \gamma_{min},20 \log_{10} \gamma_{max})$. This assumes
the standard case where the real axis intercepts satisfy
$0 < \gamma_{min} \le 1 \le \gamma_{max} <\infty$.  Recall that the
maximum phase variation $\phi_{\max}$ of any perturbation in
$D(\alpha_{max},\sigma)$ satisfies $\sin\phi_{\max} = \frac{r}{c}$ when
$r \le c$. For the balanced case $\sigma=0$ the peak phase variation occurs
at $\gamma=1$ (phase only variation) and hence $\phi_{max}=\phi_m$ for
this case. For nonzero $\sigma$, the peak $\phi_{\max}$ is not achieved for
phase-only variation and requires some amount of gain variation. The
safe region in Figure~\ref{fig:SafeGainPhase} fully quantifies how the
disk margin $\alpha_{\max}$ translates into safe levels of gain-only,
phase-only, and combined gain/phase variations.

\begin{figure}[h!]
  \centering
  \includegraphics[width=0.65\textwidth]{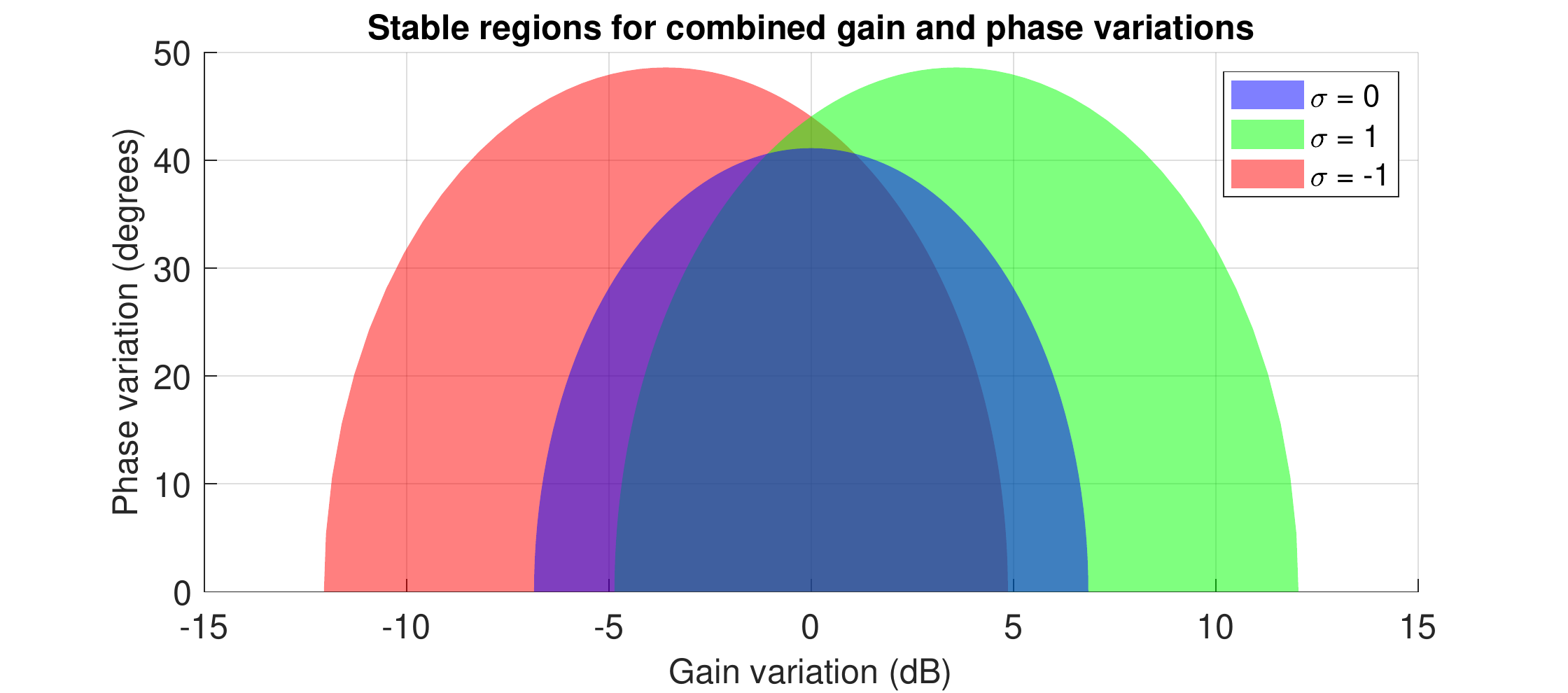}
  \caption{Safe combinations of gain and phase variations for
    $\alpha_{\max} = 0.75$.}
  \label{fig:SafeGainPhase}
\end{figure}

\begin{ex}
  \label{ex:edm2}
  The classical gain-only and phase-only margins for
  $L(s)=\frac{25}{s^3+10s^2+10s+10}$ were previously computed in
  Example~\ref{ex:cm} as $g_L=0$, $g_U =3.6$ and $\phi_U = 29.1^o$.
  Recall also that the symmetric disk margin for this loop were
  computed in Example~\ref{ex:edm} as $\alpha_{max} = 0.46$.  The
  symmetric disk provides guarantees that the classical gain margins
  are at least $g_L\le \gamma_{min} = 0.63$ and
  $g_U\ge \gamma_{max} = 1.59$.  These symmetric margins are
  $\pm 4.05$dB, i.e.  they are symmetric as multiplicative factors
  from the nominal gain of 1. The symmetric disk also guarantees
  classical phase margins of at least $\theta_U \ge \phi_m =
  25.8^o$. The gain-only and phase-only guarantees from the symmetric
  disk margin are conservative relative to the actual classical
  margins.  However, it is important to emphasize that the symmetric
  disk margin provides a stronger robustness guarantee.  Specifically,
  it ensures stability for all simultaneous gain and phase variations
  in the disk $D(\alpha_{max},\sigma=0)$.
\end{ex}

\subsection{Nyquist Exclusion Regions}


Disk margins have an interpretation in the Nyquist plane.  To
simplify the discussion, consider the typical case where
$D(\alpha_{max},\sigma)$ is the interior of a disk with real intercepts
satisfying $0<\gamma_{min}<1$ and $1<\gamma_{max}<\infty$.  The disk
margin analysis implies that $1 + f L(j \omega) \neq 0$ for all
perturbations $f\in D(\alpha_{max},\sigma)$ and all frequencies
$\omega\in \R \cup \{+\infty\}$. Rewrite this stability condition as
$L(j \omega) \neq -f^{-1}$.  The set
$\{ -f^{-1}\in \C \, : \, f \in D(\alpha_{max},\sigma) \}$ is a disk with
real axis intercepts $(-\gamma_{min}^{-1},-\gamma_{max}^{-1})$. Thus
the condition $L(j \omega) \neq -f^{-1}$ can be interpreted as a
Nyquist exclusion region, i.e. the Nyquist plot $L(j\omega)$ does not
enter the disk $\{ -f^{-1}\in \C \, : \, f \in D(\alpha_{max},\sigma) \}$.
This exclusion region contains the critical point $(-1,0)$ and is
tangent to the Nyquist curve of $L$ at some point $-1/f_0$. Varying the
skew $\sigma$ produces different exclusion regions with different
contact points.

The exclusion regions can be related to common disk margins used in
the literature.  If $\sigma=-1$ then the disk margin condition is
$\alpha_{max}=\|T\|_\infty^{-1}$. This margin is related to the robust
stability condition for models with multiplicative uncertainty of the
form $P(1+\delta)$ \cite{skogestad05,zhou96}. The real-axis intercepts
for this $T$-based margin are $\gamma_{\min} = 1-\alpha_{max}$ and
$\gamma_{\max} = 1+\alpha_{max}$.  The disk of perturbations is
centered at the nominal $f=1$ and the $\alpha_{max}$ is the radius.
The gain can increase and decrease by the same absolute amount.
However, the corresponding Nyquist exclusion disk has intercepts
$(-\gamma_{min}^{-1},-\gamma_{max}^{-1})$ and this exclusion disk is
skewed, i.e. its center is offset relative to $-1$.

If $\sigma=+1$ then the disk margin condition is
$\alpha_{max}=\|S\|_\infty^{-1}$.  The real-axis intercepts for this
$S$-based margin are $\gamma_{\min} = (1+\alpha)^{-1}$ and
$\gamma_{\max} = (1-\alpha)^{-1}$.  The disk of perturbations is
skewed with center offset from the nominal $f=1$.  The corresponding
Nyquist exclusion disk has intercepts
$(-\gamma_{min}^{-1},-\gamma_{max}^{-1})=(-1-\alpha,-1+\alpha)$.  This
Nyquist exclusion disk is centered at $-1$ with $\alpha$ as the
radius. The $S$-based margin $\alpha_{max}$ defines the distance from
the Nyquist curve of $L$ to the critical $-1$ point. Specifically, if
$\sigma=+1$ then $\alpha_{max} = \min_\omega |1+L(j\omega)|$.  Based on
this interpretation, the $S$-based margin has also been called the
vector gain margin \cite{smith58,franklin18} and modulus margin
\cite{falcoz15}.

Finally, if $\sigma=0$ then the disk margin is given by
$\alpha_{max}=\|\frac{1}{2}(S-T)\|_\infty^{-1}$. This symmetric disk
margin was introduced in \cite{barrett80} and more recently discussed
in \cite{blight94,bates02}.  The center of the perturbation disk is
offset from the nominal $f=1$ but is balanced in the sense
that $\gamma_{max} = \gamma_{min}^{-1}$. The gain variation can
increase or decrease by the same relative factor.  Moreover, the
corresponding Nyquist exclusion disk has intercepts
$(-\gamma_{min}^{-1},-\gamma_{max}^{-1})$.  This Nyquist exclusion
disk also has center offset from $-1$. However, the exclusion disk is
again balanced in the sense that the real axis intercepts are the same
relative factor from -1. Thus for $\sigma=0$ both the perturbation and
Nyquist exclusion sets are symmetric (balanced) disks.

\begin{ex}
  \label{ex:nyqexc}
  The left plot in Figure~\ref{fig:nyqexc} shows the Nyquist plot and
  three exclusion regions for
  $L(s) = \frac{25}{s^3 + 10 s^2 + 10s + 10}$.  Each exclusion region
  is the disk $\{ -f^{-1}\in \C \, : \, f \in D(\alpha_{max},\sigma) \}$
  with $\alpha_{\max}=\| S + \frac{\sigma-1}{2} \|_\infty^{-1}$. The right
  plot is zoomed more tightly on the exclusion regions. Note that each
  exclusion region is tangent to the Nyquist curve of $L$ at some
  point. These tangent points correspond to $-f_0^{-1}$ where $f_0$ is
  the destabilizing perturbation for the given skew $\sigma$.
\end{ex}

\begin{figure}[h!]
  \centering
  \includegraphics[width=0.45\textwidth]{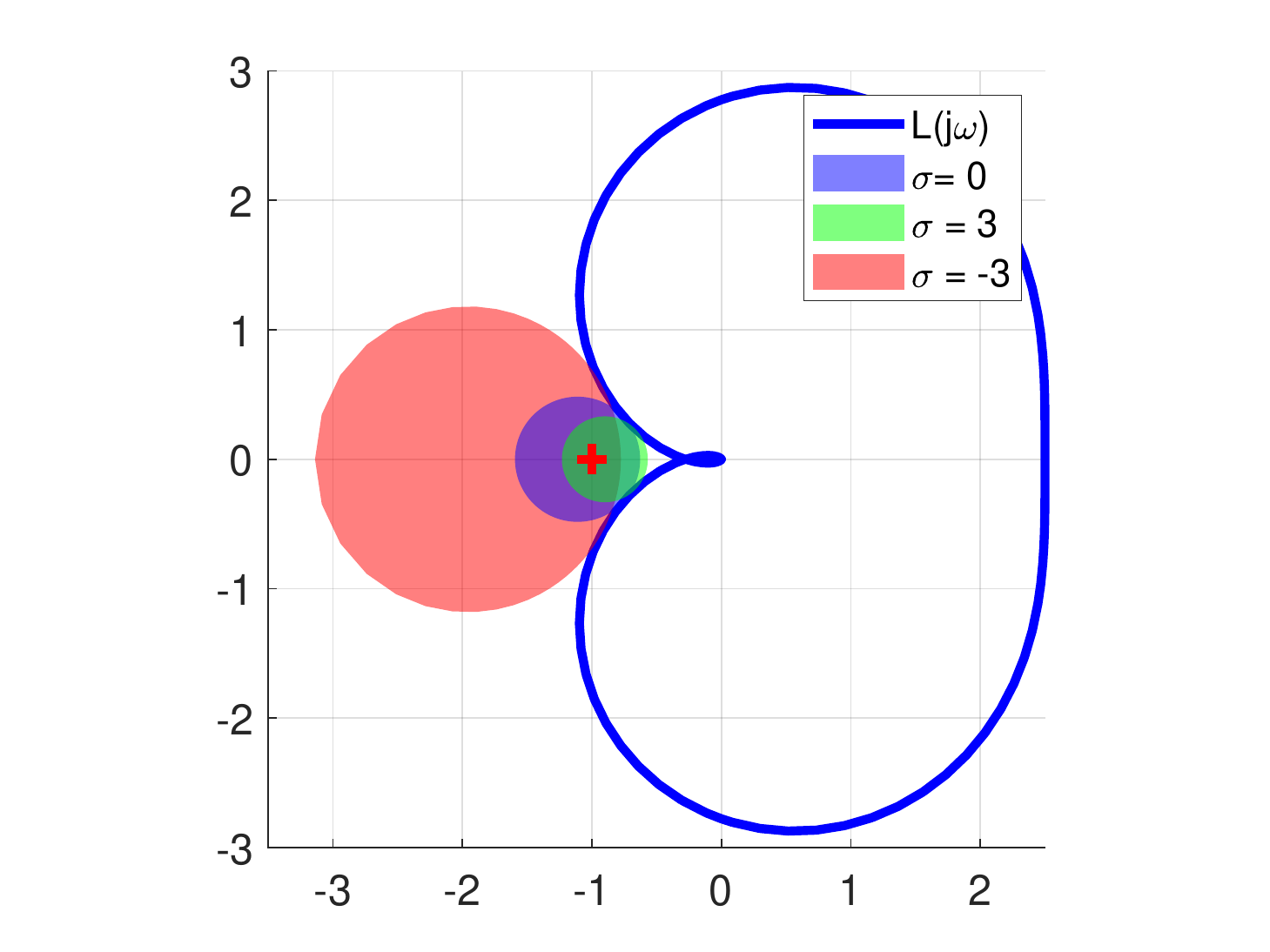}
  \includegraphics[width=0.45\textwidth]{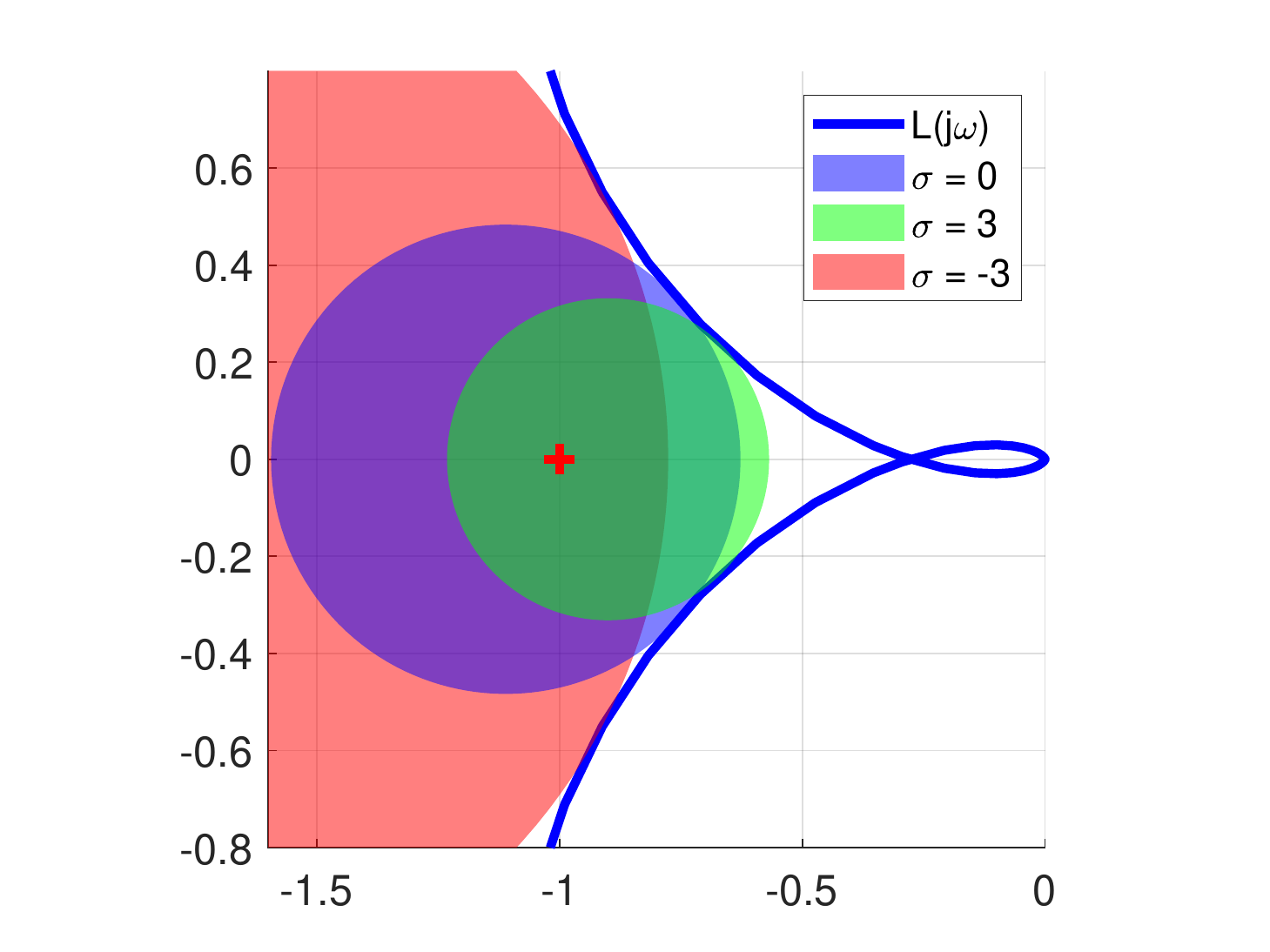}
  \caption{Nyquist exclusion regions based on disk margins with different
    skews.}
  \label{fig:nyqexc}
\end{figure}

%
%
%
%

\subsection{Frequency-Dependent Margins}

The disk margin for a given skew $\sigma$ is the largest value of
$\alpha$ such that the closed-loop remains well-posed and stable for
all perturbations in $D(\alpha,\sigma)$. The perturbations are
parameterized as $f(\delta)$ with $|\delta|<\alpha$.  Computing the
disk margin amounts to finding the smallest $\delta$ such that
$1 + f( \delta ) L(j \omega) = 0$ at some frequency $\omega$. This
problem can be considered at each frequency.  That is, define the disk
margin at the frequency $\omega$ as follows:
\begin{align}
  \alpha_{max}(\omega) 
  := \min \{ |\delta| : 1 + f( \delta ) L(j \omega) = 0 \}.
\end{align}
This specifies the minimum amount of gain and phase variation needed
to destabilize the loop at this frequency.  Similar to
Theorem~\ref{thm:edm}, this frequency-dependent margin is given by:
\begin{align}
\alpha_{max}(\omega) = \left| S(j\omega) + \frac{\sigma-1}{2}\right|^{-1}.
\end{align}
Moreover, the actual disk margin $\alpha_{max}$ is equal to the
smallest of all the frequency-dependent disk margins:
\begin{align}
  \alpha_{\max} = \min_{\omega\in \R \cup \{+\infty\}} \alpha_{max}(\omega).
\end{align}
A plot of $\alpha_{max}(\omega)$ vs. $\omega$ provides more information
about the feedback loop than just its smallest value $\alpha_{\max}$.
For example, such a plot can identify frequency bands where the disk
margin is weak. The margins in these frequency bands can then be
compared with the expected level of model
uncertainty. Frequency-dependent margins may also reveal robustness
issues away from the gain crossover frequency, e.g., near a resonant
mode that has not been sufficiently attenuated. This motivates the
case for plotting disk margins vs. frequency or, for easier
interpretation, plotting the equivalent gain-only and phase-only
margins $(\gamma_{min},\gamma_{max})$ and $\phi_m$ as a function of
frequency.  The formulas obtained earlier for $(\gamma_{min},\gamma_{max})$
and $\phi_m$ (Equations~\ref{eq:galphamax} and \ref{eq:cosphim}) can be used 
with $\alpha_{max}$ replaced by $\alpha_{max}(\omega)$.

\begin{ex}
  \label{ex:freqdepdm}
  Consider the following loop transfer function:
  \begin{align}
    L(s) = \frac{6.25 (s+3) (s+5)}{ s (s+1)^2 (s^2 + 0.18s + 100)}.
  \end{align}
  The Bode plot for this loop is shown on the left of
  Figure~\ref{fig:FreqDMForResL}. This loop has a resonance near
  10 rad/sec.  The right side of the figure plots the
  frequency-dependent gain-only and phase-only margins computed from
  the symmetric disk margin.  The gain-only plot corresponds to the
  weaker of the two gain margins,
  i.e. $\gamma_m:=\min(1/\gamma_{min},\gamma_{max})$.  At each
  frequency, the gain margin value indicates the minimum amount of
  relative gain variation needed to destabilize the loop at this
  frequency, i.e. cause a closed-loop pole to cross the imaginary axis
  at this frequency. The frequency-dependent phase margin plot has a
  similar interpretation. The frequency where these margins are
  smallest is the \emph{critical frequency} and corresponds to the
  frequency that minimizes $\alpha_{max}(\omega)$.  This pinpoints the
  frequency band where stability is most problematic and typically
  lies near the crossover frequency. The plot may also highlight other
  problematic regions. For example, the disk-based margins in
  Figure~\ref{fig:FreqDMForResL} are weak in a wide band around
  crossover but also near the first resonant mode. Also note that
  $\gamma_m \rightarrow \infty$ and $\phi_m \rightarrow 90^o$ past 10
  rad/s because $\alpha_{max}(\omega) \rightarrow 2$ and thus the
  stable region $D(\alpha_{max}(\omega),\sigma=0)$ approaches the half plane
  ${\rm Re}(f)\geq 0$.
\end{ex}

\begin{figure}[h!]
  \centering
  \includegraphics[width=0.45\textwidth]{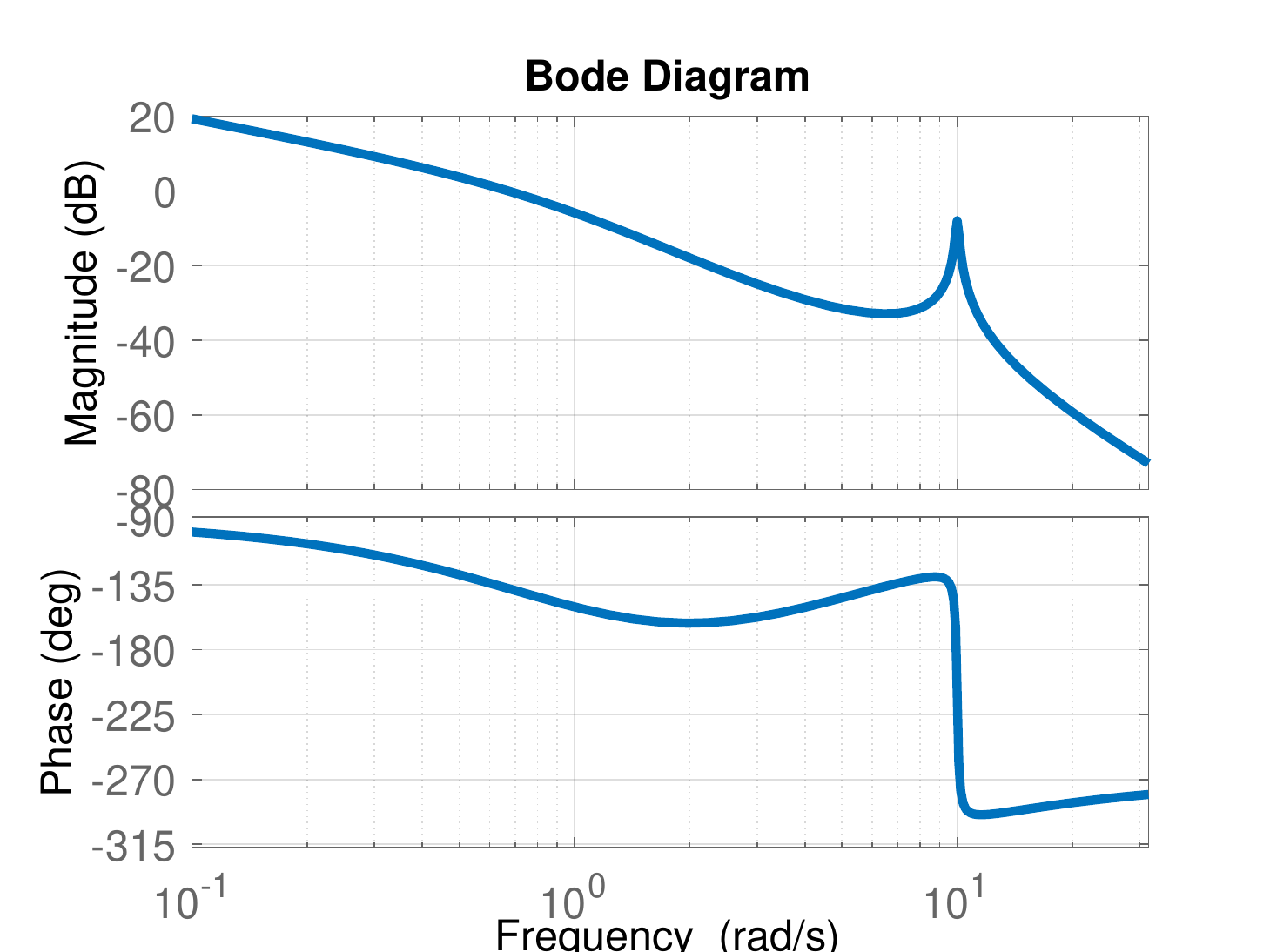}
  \includegraphics[width=0.45\textwidth]{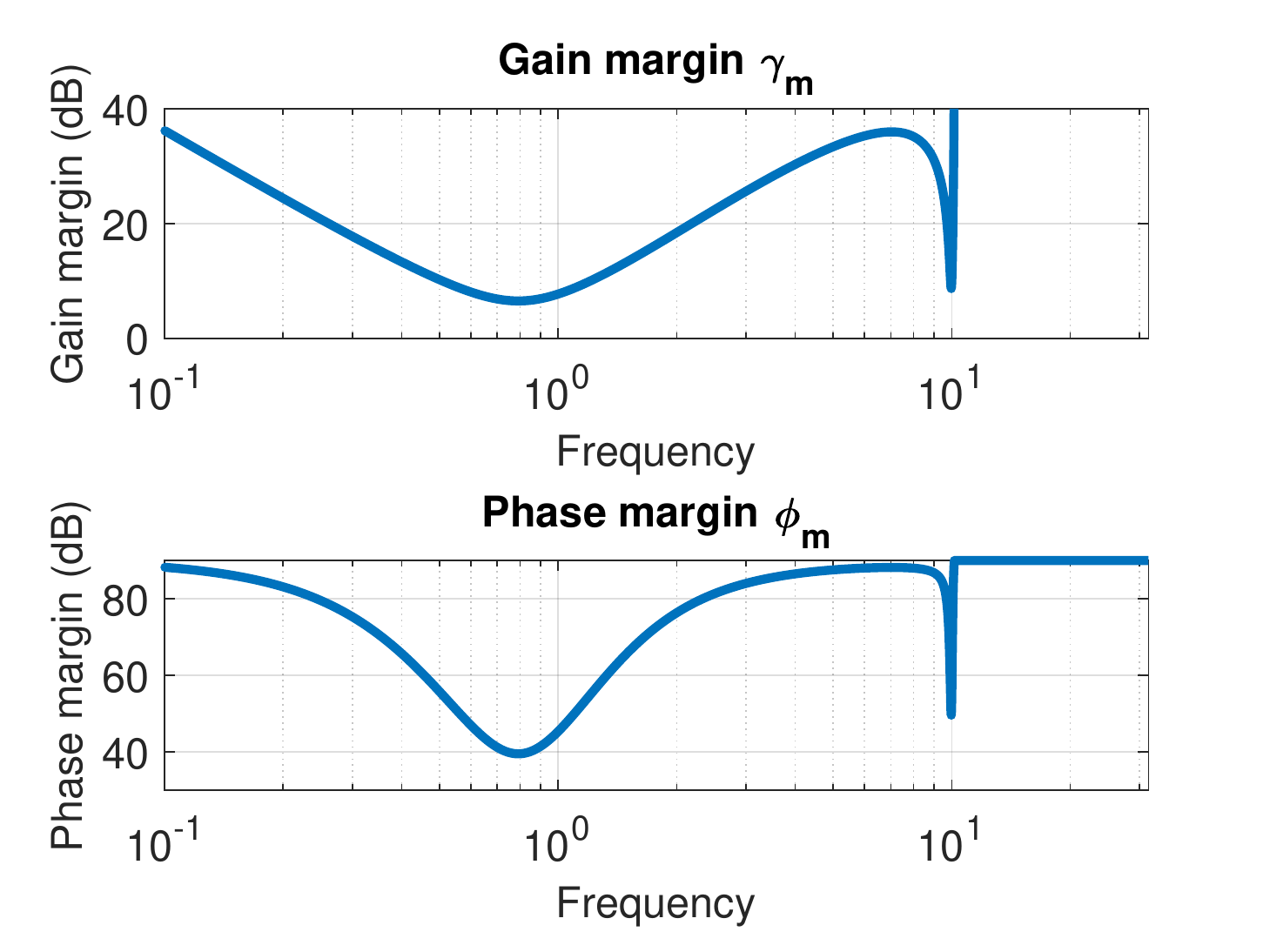}
  \caption{Open-loop response for $L$ (left) and corresponding
    frequency-dependent disk gain and phase margins for $\sigma=0$.}
  \label{fig:FreqDMForResL}
\end{figure}

\section{Margins for MIMO Systems}

This section briefly reviews two different margins for MIMO
feedback systems.  The first analysis is loop-at-a-time. This
introduces perturbations in a single channel while holding all
other channels fixed.  This can be overly optimistic as it fails
to capture the effects of simultaneous perturbations in multiple
channels.   The second analysis considers the effects of
such simultaneous perturbations in multiple channels.

\subsection{Loop-at-a-time Margins}

Loop-at-a-time analysis is a simple extension of classical margins to
assess the robustness of a MIMO feedback system.  The procedure is
illustrated for a $2\times 2$ MIMO plant as shown in
Figure~\ref{fig:MIMOpert1}.  A scalar (gain, phase, or disk)
perturbation $f_1$ is introduced at the first input of the plant
$P$. The other loop is left at its nominal (unperturbed) value.
First, break the loop at the location of the perturbation as shown on
the left side of Figure~\ref{fig:MIMOopen1}. Next, compute the
transfer function from the scalar input $z_1$ to the scalar output
$u_1$ (with the other loop closed as shown).  Denote this SISO open
loop transfer function as $L_1$. The subscript of $L_1$ reflects that
the loop was broken at the first channel at the input of $P$.  The
perturbation $f_1$ closes the loop from $u_1$ to $z_1$.  Hence the
MIMO feedback with perturbation at the first input of $P$ can be
re-drawn as the SISO feedback system shown on the right side of
Figure~\ref{fig:MIMOopen1}. The (gain, phase, or disk) margin
associated with this loop can be computed using the SISO methods
discussed previously.  This gives the margin associated with the first
input of $P$.  Note that $L_1$ is the transfer function from $z_1$ to
$u_1$ and hence Figure~\ref{fig:MIMOopen1} is in positive feedback.
The margins must be evaluated using $-L_1$ because the standard
convention assumes the loop is in negative feedback.  The margins can
be computed similarly at the second input of $P$ as well as at both
outputs of $P$.

\begin{figure}[h!]
\centering
\scalebox{0.8}{
\begin{picture}(330,90)(0,-65)
 \thicklines 
 \put(10,10){\vector(1,0){50}}  
 \put(30,-10){\vector(1,0){30}}  
 \put(60,-20){\framebox(40,40){$K$}}
 \put(100,10){\vector(1,0){40}}  
 \put(115,15){$u_1$}
 \put(140,0){\framebox(20,20){$f_1$}}
 \put(170,15){$z_1$}
 \put(160,10){\vector(1,0){40}}  
 \put(100,-10){\vector(1,0){100}} 
 \put(130,-20){$u_2=z_2$} 
 \put(200,-20){\framebox(40,40){$P$}}
 \put(240,10){\vector(1,0){80}}  
 \put(300,10){\line(0,-1){70}}  
 \put(300,-60){\line(-1,0){290}}  
 \put(10,-60){\line(0,1){70}}  
 \put(240,-10){\vector(1,0){40}}  
 \put(260,-10){\line(0,-1){30}}  
 \put(260,-40){\line(-1,0){230}}  
 \put(30,-40){\line(0,1){30}}  
\end{picture}
}
\caption{MIMO feedback system with perturbation in the first input
  channel of $P$.}
\label{fig:MIMOpert1}
\end{figure}
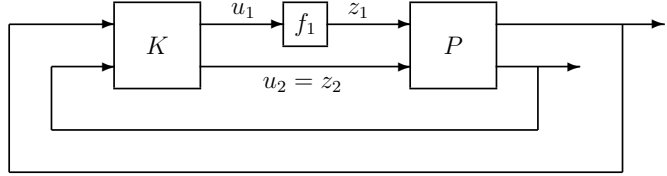

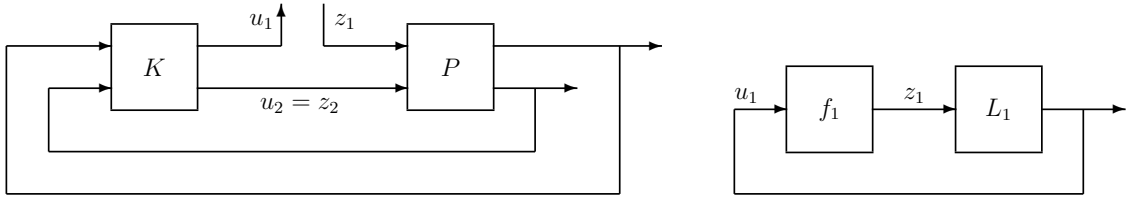
\begin{figure}[h!]
\centering
\scalebox{0.8}{
\begin{picture}(330,100)(0,-65)
 \thicklines 
 \put(10,10){\vector(1,0){50}}  
 \put(30,-10){\vector(1,0){30}}  
 \put(60,-20){\framebox(40,40){$K$}}
 \put(100,10){\line(1,0){40}}
 \put(140,10){\vector(0,1){20}}
 \put(125,20){$u_1$}
 \put(165,20){$z_1$}
 \put(160,10){\line(0,1){20}}
 \put(160,10){\vector(1,0){40}}  
 \put(100,-10){\vector(1,0){100}}  
 \put(130,-20){$u_2=z_2$} 
 \put(200,-20){\framebox(40,40){$P$}}
 \put(240,10){\vector(1,0){80}}  
 \put(300,10){\line(0,-1){70}}  
 \put(300,-60){\line(-1,0){290}}  
 \put(10,-60){\line(0,1){70}}  
 \put(240,-10){\vector(1,0){40}}  
 \put(260,-10){\line(0,-1){30}}  
 \put(260,-40){\line(-1,0){230}}  
 \put(30,-40){\line(0,1){30}}  
\end{picture}
\begin{picture}(215,70)(35,-45)
 \thicklines 
 \put(55,0){\vector(1,0){25}}  
 \put(55,5){$u_1$}
 \put(80,-20){\framebox(40,40){$f_1$}}
 \put(120,0){\vector(1,0){40}}  
 \put(135,5){$z_1$}
 \put(160,-20){\framebox(40,40){$L_1$}}
 \put(200,0){\vector(1,0){40}}   
 \put(220,0){\line(0,-1){40}}  
 \put(220,-40){\line(-1,0){165}}  
 \put(55,-40){\line(0,1){40}}  
\end{picture}
}
\caption{Left: MIMO feedback system with loop broken at the first input
  channel of $P$. \\
Right: SISO feedback with perturbation $f_1$ and loop $L_1$ obtained
at input $1$ of plant.}
\label{fig:MIMOopen1}
\end{figure}

In general, loop-at-a-time margins are computed by breaking one loop
with all other loops closed.  If the plant is $n_y \times n_u$ then
this gives $n_u$ margins at the inputs of $P$ and $n_y$ margins at the
outputs of $P$.  Unfortunately, the loop-at-a-time margins can be
overly optimistic.  In particular, a MIMO feedback system can have
large loop-at-a-time margins and yet be destabilized by small
perturbations acting simultaneously on multiple channels.  An example
is provided below to demonstrate this situation. This motivates the
development of more advanced robustness analysis tools.


%
\begin{ex}
\label{ex:sattelite1}
Consider a feedback system with the following plant and controller
with $a=10$:
\begin{align}
  P := \frac{1}{s^2+a^2} \,  
            \bmtx s-a^2 & a(s+1) \\ -a(s+1) & s-a^2 \emtx
  \mbox{ and } 
  K := -\bmtx 1 & 0 \\ 0 & 1 \emtx.
\end{align}
This example is taken from \cite{doyle78}.  The dynamics represent a
simplified model for a spinning satellite.  Additional details can be
found in Section 3.7 of \cite{skogestad05} or Section 9.6 of
\cite{zhou96}.  Breaking the loop at the first input of $P$, with the
other loop closed, yields the SISO open loop transfer function
$L_1 = -\frac{1}{s}$.  This loop (when in a positive feedback as
in Figure~\ref{fig:MIMOopen1}) has no $180^o$ phase crossover
frequencies so the classical gain margins are $g_L=0$ and
$g_U=\infty$. This loop has a single gain crossover at $\omega = 1$
rad/sec which gives a classical phase margin of
$\phi_U=90^o$. Finally, the SISO loop $L_1$ corresponds to the
sensitivity $S_1 = \frac{s}{s+1}$ and complementary sensitivity
$T_1 = \frac{1}{s+1}$.  The symmetric disk margin ($\sigma=0$) is
$\alpha_{max}=\| \frac{1}{2} (S_1-T_1) \|_\infty^{-1} = 2$. This
corresponds to a disk covering the entire RHP, i.e. stability is
maintained for any combination of gain/phase such at
$Re\{ f_1 \} > 0$.  These results demonstrate that the MIMO feedback
system is very robust to perturbations at the first input of $P$
assuming all other inputs/outputs remain at their nominal
value. Breaking the loop at the second input of $P$ or either output
of $P$ yields the same open loop transfer function, e.g.
$L_2 = -\frac{1}{s}$ at the second plant input.  Thus the
loop-at-a-time analysis demonstrates the MIMO feedback system is very
robust to perturbations at any single input or output of $P$ assuming
all other inputs/outputs remain at their nominal value.

Consider the following small simultaneous perturbation at both input
channels of the plant: $f_1 = 0.9$ and $f_2 = 1.1$.  These
simultaneous perturbations to both input channels destabilize the MIMO
feedback system. The loop-at-a-time margins fail to capture such
simultaneous variations in multiple channels.  As a consequence, the
loop-at-a-time margins provide an overly optimistic assessment of the
system robustness. 



\end{ex}








\subsection{Multi-Loop Disk Margins}



Multi-loop disk margins capture the effects of simultaneous
perturbations in multiple channels.  Figure~\ref{fig:MultiLoopMargins}
illustrates the use of multi-loop disk margins for a $2\times 2$ MIMO
plant $P$.  Scalar perturbations $f_1$ and $f_2$ are
introduced at the two input channels of the plant.  The perturbations
are restricted to a set $D(\alpha,\sigma)$ (Equation~\ref{eq:Fe}) defined
for a given skew $\sigma$. Symmetric disks of
perturbations ($\sigma=0$) are a common choice. The multi-loop disk margin
is a single number $\alpha_{max}$ defining the largest generalized disk of
perturbations $f_1$ and $f_2$ for which the closed-loop in
Figure~\ref{fig:MultiLoopMargins} is well-posed and stable.  It is
emphasized that the perturbations $f_1$ and $f_2$ are
allowed to vary independently, i.e. they are not necessarily
equal. More generally, if the plant $P$ is $n_y \times n_u$ then there
will be $n_u$ perturbations introduced at the plant input.  The margin
for this configuration is called the multi-loop \textit{input} disk
margin.  Alternatively, $n_y$ perturbations can be introduced at the
plant output.  This is referred to as the multi-loop \textit{output}
disk margin. Finally, $(n_y+n_u)$ perturbations can be introduced into
both the input and output channels to obtain the multi-loop
\textit{input/output} disk margin.

In the most general case, multi-loop margins can be defined with
perturbations introduced at arbitrary points in a feedback
system. This general formulation corresponds to a feedback system with
a collection of complex perturbations $(f_1,\ldots,f_n)$.  The
multi-loop margin is the largest value of $\alpha$ such that the
feedback system remains well-posed and stable for all perturbations
$(f_1,\ldots,f_n)$ in the set $D(\alpha,\sigma)$ specified for a given
skew $\sigma$.  The next two examples illustrate various types of
multi-loop margins.  The theory required to compute such multi-loop
margins is reviewed below in the subsection entitled ``Computing
Multi-Loop Disk Margins''.

\begin{figure}[h!]
\centering
\scalebox{0.8}{
\begin{picture}(330,90)(0,-65)
 \thicklines 
 \put(10,10){\vector(1,0){50}}  
 \put(30,-10){\vector(1,0){30}}  
 \put(60,-20){\framebox(40,40){$K$}}
 \put(100,10){\vector(1,0){40}}  
 \put(115,17){$u_1$}
 \put(140,2){\framebox(20,20){$f_1$}}
 \put(170,17){$z_1$}
 \put(160,12){\vector(1,0){40}}  
 \put(100,-12){\vector(1,0){40}}  
 \put(115,-7){$u_2$}
 \put(140,-22){\framebox(20,20){$f_2$}}
 \put(170,-7){$z_2$}
 \put(160,-12){\vector(1,0){40}}  
\put(200,-20){\framebox(40,40){$P$}}
 \put(240,10){\vector(1,0){80}}  
 \put(300,10){\line(0,-1){70}}  
 \put(300,-60){\line(-1,0){290}}  
 \put(10,-60){\line(0,1){70}}  
 \put(240,-10){\vector(1,0){40}}  
 \put(260,-10){\line(0,-1){30}}  
 \put(260,-40){\line(-1,0){230}}  
 \put(30,-40){\line(0,1){30}}  
\end{picture}
}
\caption{Multi-loop input disk margins for a $2\times 2$ plant $P$.}
\label{fig:MultiLoopMargins}
\end{figure}
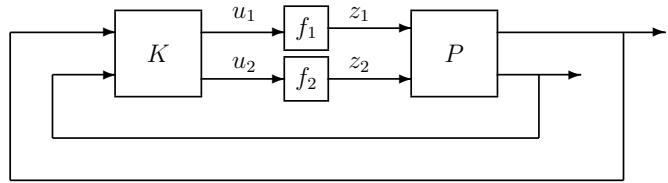


\begin{ex}
\label{ex:sattelite2}
Consider the spinning satellite discussed in
Example~\ref{ex:sattelite1}. The multi-loop input margin is computed for
this $2 \times 2$ feedback system using symmetric disks ($\sigma=0$).  This
yields $\alpha_{max}=0.0997$ corresponding to the disk with
$\gamma_{max} = \frac{1+0.5\alpha_{max}}{1-0.5\alpha_{max}} = 1.105$ and
$\gamma_{min}=\gamma_{max}^{-1} = 0.905$.  Hence the plant can
tolerate independent perturbations $f_1$ and $f_2$ at the plant inputs
with gain-only variations in $(0.905,1.105)$.  These margins
indicate that the spinning satellite feedback system is sensitive to
small perturbations occurring at both inputs to the plant. The
multi-loop output margin is the same for this system.  Multi-loop
margins can also be defined with perturbations introduced
(simultaneously) at the two inputs and two output channels.  For the
spinning satellite, this multi-loop input/output margin is
$\alpha_{max} = 0.0498$ corresponding to
$(\gamma_{min},\gamma_{max})= (0.941,1.051)$. Details on this example
including corresponding code can be found in the Matlab example
entitled ``MIMO stability margins for spinning satellite''.
\end{ex}

\begin{ex}
\label{ex:sim}
Consider the Simulink diagram for an aircraft longitudinal controller
shown in Figure~\ref{fig:airframe}.  The left side of the figure shows
blocks for the airframe dynamics, inner loop pitch-rate ($q$) control,
and outer-loop vertical acceleration ($a_z$) control.  The right side
of the figure shows one the subsystem containing the aerodynamics for
the airframe model. This Simulink model is part of a Matlab example
entitled ``Stability Margins of a Simulink Model''.  The model is
modified to include three complex perturbations inserted at various
points.  One perturbation is inserted at the plant input (red dot on
left diagram). Two other perturbations are inserted in the
aerodynamics subsystem (red dots on right diagram). These are inserted
on signals for the vertical force $F_z$ and pitching moment $M$.
These two additional perturbations can be used to model, for example,
the discrepancy in the modeled and actual aerodynamics for this force
and moment.

Figure~\ref{fig:simulinkmargin} shows the Matlab code to compute two
different disk margins for this example. The \texttt{linio} command
specifies the analysis points.  The model is nonlinear and hence the
dynamics must first be linearized around an operating point. This is
done with the \texttt{linearize} command. The symmetric disk margin is
computed at the plant input (\texttt{DMi}). Note that
\texttt{linearize} returns the loop transfer function assuming
positive feedback while \texttt{diskmargin} assumes negative feedback.
This symmetric disk margin at the plant input is
$\alpha_{max} = 0.774$.  This corresponds to a disk with
$(\gamma_{min},\gamma_{max})=(0.442,2.263)$. Hence the classical
margins are at least $g_L \le 0.442$, $g_U\ge 2.263$ and
$\phi_U \ge 42.3^o$. Next the disk margins are computed using all
three analysis points.  The multi-loop margin with symmetric disks
(\texttt{MM3}) is $\alpha_{max} = 0.428$.  Hence the feedback system
remains well-posed and stable for independent perturbations at the
three analysis points that remain in the disk with
$(\gamma_{min},\gamma_{max})= (0.648,1.544)$.

\begin{figure}[h!]
  \centering
  \includegraphics[width=0.6\textwidth]{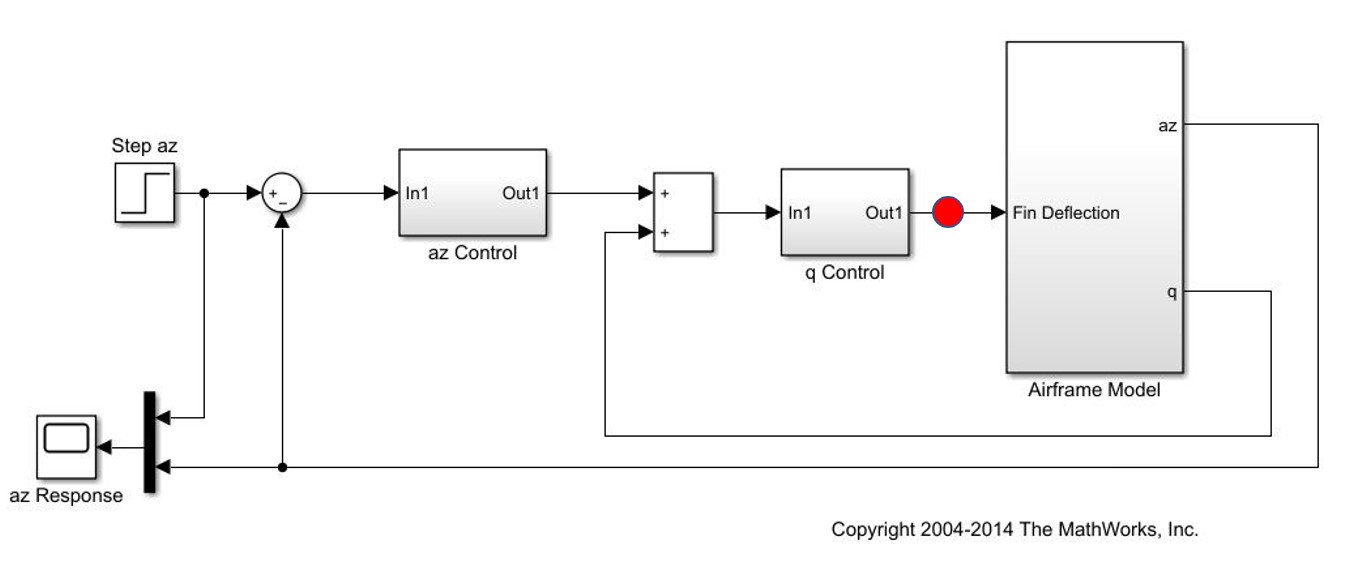}
  \includegraphics[width=0.38\textwidth]{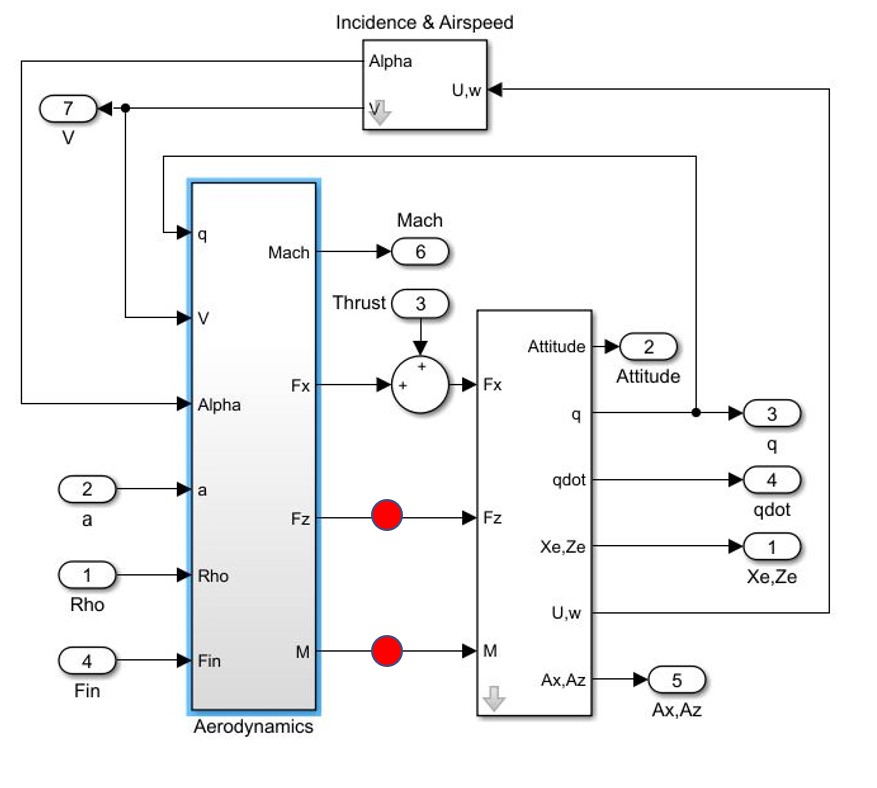}
  \caption{Simulink diagram for a longitudinal aircraft controller.}
  \label{fig:airframe}
\end{figure}
\end{ex}

\begin{figure}[t!]
\begin{footnotesize}
\begin{verbatim}
% Open simulink model from Matlab example
open_system('airframemarginEx.slx')

% Specify analysis point at plant input
aPoints(1) = linio('airframemarginEx/q Control',1,'looptransfer');

% Specify analysis points inside aerodynamic model
blk = ['airframemarginEx/Airframe Model/' ...
   'Aerodynamics & Equations of Motion/Aerodynamics'];
aPoints(2) = linio(blk,3,'looptransfer');
aPoints(3) = linio(blk,4,'looptransfer');

% Linearize and compute disk margin at plant input
Li = linearize('airframemarginEx',aPoints(1) );   
DMi = diskmargin(-Li)

% Linearize and compute disk margins at three analysis points
L3 = linearize('airframemarginEx',aPoints);
[DM3,MM3] = diskmargin(-L3)
\end{verbatim}
%
%
%
\end{footnotesize}
\caption{Code for aircraft multi-loop margins.}
\label{fig:simulinkmargin}
\end{figure}

\subsection{Computing Multi-Loop Disk Margins}

Consider a feedback system with $n$ complex perturbations
$(f_1,\ldots,f_n)$ introduced at arbitrary points. It is assumed that
the feedback system is well-posed and stable if all perturbations are
at their nominal value, $f_i=1$ for all $i$.  The multi-loop disk
margin, denoted $\alpha_{max}$, was defined in the subsection entitled
``Multi-Loop Disk Margins''. It is the largest value of $\alpha$ such
that the feedback system remains well-posed and stable for all
perturbations $(f_1,\ldots,f_n)$ in the set $D(\alpha,\sigma)$ with a given
disk skew $\sigma$.

The condition for SISO disk margins (Theorem~\ref{thm:edm}) can be
generalized for the multi-loop case.  The starting point for the SISO
disk margin result was the condition: $f\in D(\alpha,\sigma)$ places a
closed-loop pole at $s=j\omega$ if and only if $1+fL(j\omega)=0$.  The
next step was to express the perturbation $f$ in terms of
$|\delta|<\alpha$.  This led to the following stability condition
(Equation~\ref{eq:SdmCondition}):
\begin{align}
  \label{eq:SdmCondition2}
  1-\delta \left( S(j\omega) + \frac{\sigma-1}{2} \right) =0.
\end{align}
This has the form $1-\delta M(j\omega)=0$ where $M:=S+\frac{\sigma-1}{2}$.
This is the stability condition for a feedback system with $\delta$ in
positive feedback with $M$.  Similarly, each perturbation in a
multi-loop analysis can be expressed as
$f_i = \frac{2+(1-\sigma)\delta_i}{2-(1+\sigma) \delta_i}$ for some
$|\delta_i|<\alpha$. In this way the multi-loop margin analysis
involving perturbations $f_i$ is mapped to an equivalent $M$-$\Delta$
positive feedback loop as shown in Figure~\ref{fig:MDelta}.  Here $M$
is a stable $n \times n$ system and $\Delta \in \C^{n \times n}$ is
the diagonal matrix of complex perturbations
$\Delta:=diag(\delta_1,\ldots,\delta_n)$.  The multi-loop margin is
equivalent to the largest value of $\alpha$ such that the positive
feedback system with $M$ and $\Delta:=diag(\delta_1,\ldots,\delta_n)$
is well-posed and stable for all complex perturbations
$|\delta_i| < \alpha$ ($i=1,\ldots,n$). Additional details on this
$M$-$\Delta$ modeling framework can be found in
\cite{zhou96,dullerud00,skogestad05}.

\begin{figure}[h]
\centering
\tikzstyle{block} = [draw, rectangle, minimum width=1.2cm, 
     minimum height=1.2cm, align=center]
\tikzstyle{sum} = [draw,circle,inner sep=0mm,minimum size=3mm] 
\begin{tikzpicture}[auto,>=stealth',very thick,node distance = 1.5cm]
\node [coordinate, name=input1] {};
\node[sum, right = of input1](sum1){};
\node[block, right=of sum1](delta) {$\Delta$};
\node[block, below=0.8cm of delta](M) {$M$};
\node[sum, right = of M](sum2){};
\node [coordinate, right=of sum2](input2) {};
\draw[->] (input1) -- (sum1.west) node[pos=0.2]{$d_1$};
\draw[->] (sum1.east) -- (delta.west)  node[pos=0.5]{$e_1$};
\draw[->] (delta.east) -| (sum2.north);
\draw[->] (input2) -- (sum2.east) node[pos=0.2]{$d_2$};
\draw[->] (sum2.west) -- (M.east)  node[pos=0.5]{$e_2$};
\draw[->] (M.west) -| (sum1.south) node[pos=0.3]{$v$};
\end{tikzpicture} 
\caption{$M$-$\Delta$ feedback system for multi-loop margins.}
\label{fig:MDelta}
\end{figure}
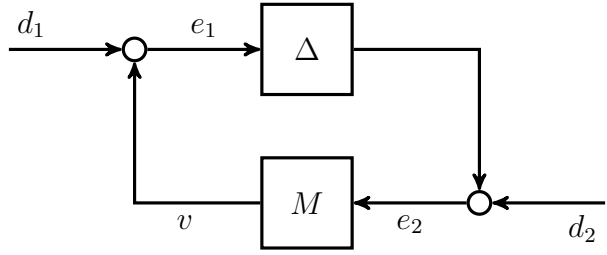


The nominal perturbation corresponds to $\Delta=0$ with nominal system
$M$.  The assumption of nominal stability thus implies the poles of
$M$ are in the LHP. The perturbation $\Delta$ causes the closed-loop
poles to move continuously in the complex plane away from their
nominal values.  The poles may move into the RHP (unstable
closed-loop) if $\Delta$ is varied by a sufficiently large amount from
the nominal value $\Delta=0$. The transition from stable to unstable
occurs when the closed-loop poles cross the imaginary axis. As in the
SISO case, it is thus useful to have a condition that characterizes
this stability transition, i.e. a condition that characterizes the
existence of imaginary axis poles. It can be shown that the
$M$-$\Delta$ system has a pole on the imaginary axis at $j\omega$ if
and only if $ \det (I - M(j\omega) \Delta )=0$.  To sketch a
simplified derivation, consider the case where $M$ has no direct
feedthrough $(D=0)$. Let $(A,B,C,D=0)$ be a state-space realization
for $M$. The poles of the $M$-$\Delta$ system are given by the
eigenvalues of the state matrix $A_{cl}:=A+B \Delta C$.  There is a
pole on the imaginary axis at $j\omega$ if and only if
$\det( j\omega I - A_{cl})=0$.  Stability of $M$ implies that
$j\omega$ is not an eigenvalue of $A$.  Hence $(j\omega I-A)$ has a
non-zero determinant and its inverse exists. Thus
$\det( j\omega I - A_{cl})=0$ is equivalent to
$\det\left( I - (j\omega I-A)^{-1} B \Delta C \right)=0$.  Finally,
apply Sylvester's determinant identity (Corollary 3.9.5 in
\cite{bernstein18}):
\begin{align*}
  0 = \det (I - (j\omega I-A)^{-1} B \Delta C ) =   
  \det (I - M(j\omega) \Delta).
\end{align*}
If there is only one perturbation ($n=1$) then the determinant
condition simplifies to $1-M(j\omega) \cdot \delta=0$.  This is the
same condition that appeared in the proof for the SISO Small Gain
result (Equation~\ref{eq:SdmCondition} and rewritten in
Equation~\ref{eq:SdmCondition2}).

In this SISO case, if the gain $|M(j\omega)|$ is large then there is a
small perturbation $\delta=M(j\omega)^{-1}$ that causes a pole on the
imaginary axis at $j\omega$. The MIMO case requires an appropriate
generalization for the connection between the ``gain'' of the system
$M$ and the existence of small, destabilizing perturbations
$(\delta_1,\ldots,\delta_n)$.  First, let
$\mathbf{\Delta} \subset \C^{n\times n}$ denote the set of diagonal,
complex matrices and define the norm for any
$\Delta \in \mathbf{\Delta}$ by
$\|\Delta\|:=\max_{i=1,\ldots,n} | \delta_i |$.  In other words, the
norm is given by the largest (magnitude) of the diagonal entries.
Note that all perturbations $f_i$ are in the set $D(\alpha,\sigma)$
if and only if $|\delta_i|<\alpha$, i.e. if and only if
$ \| \Delta \| < \alpha$.

Next, define the function
$\mu: \C^{n \times n} \rightarrow [0,\infty)$ by:
\begin{align}
  \label{eq:ssv}
  \mu(M_0) := \left( \min_{\Delta \in \mathbf{\Delta}} \|\Delta\| \, : \,
  \det( I - M_0 \Delta ) = 0 \right)^{-1}.
\end{align}
By definition, $\mu(M(j\omega))$ is large if and only if there is a
``small'' $\Delta_0 \in \mathbf{\Delta}$ such that
$\det(I-M(j\omega)\Delta_0)=0$. By the discussion above, this
perturbation causes the $M$-$\Delta$ system to have a pole on the
imaginary axis.  This function $\mu$ is known as the structured
singular value or simply ``mu''
\cite{doyle78,safonov80,doyle82,doyle85,packard93,fan91}.  The
structured singular value can be used to assess robust stability and
performance of systems with more general types of uncertainties
including real, complex, and dynamic LTI uncertainties. The version in
Equation~\ref{eq:ssv} is a special instance of this more general
framework adapted for multi-loop disk margins.  It is difficult to
exactly compute $\mu(M_0)$ for a given complex matrix $M_0$ and
uncertainty set $\mathbf{\Delta}$.  However, there are efficient
algorithms to compute upper and lower bounds on $\mu(M_0)$.  The
following theorem provides a condition for the multi-loop disk margin
using this function $\mu$. It uses the following notation for the peak
value of $\mu$ across all frequencies:
\begin{align}
  \| \mu(M) \|_\infty:= \max_{\omega \in \R \cup \{+\infty\}}
  \mu\left( M(j\omega) \right).
\end{align}

\begin{theorem}
\label{thm:MLdm}
Assume $M$ is proper and stable. The multi-loop disk margin is given
by $\alpha_{max} = \| \mu(M) \|_\infty^{-1}$.
\end{theorem}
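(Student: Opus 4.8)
The plan is to mirror the proof of Theorem~\ref{thm:edm} for the scalar case, with the scalar gain $|M(j\omega)|$ replaced by $\mu(M(j\omega))$ and the scalar stability-transition condition $1-\delta M(j\omega)=0$ replaced by the determinant condition $\det(I-M(j\omega)\Delta)=0$ established above. Recall the multi-loop analysis has already been reduced to the $M$-$\Delta$ positive-feedback loop of Figure~\ref{fig:MDelta}: the perturbations $f_i \in D(\alpha,\sigma)$ correspond to a diagonal $\Delta = diag(\delta_1,\ldots,\delta_n) \in \mathbf{\Delta}$ with $\|\Delta\| < \alpha$, and the original feedback system is well-posed and stable if and only if this $M$-$\Delta$ loop is. Since $M$ is stable, the loop is well-posed and stable at $\Delta = 0$; as $\Delta$ varies the closed-loop poles move continuously and can reach the imaginary axis (or $\omega=\infty$) only where $\det(I - M(j\omega)\Delta) = 0$ for some $\omega \in \R \cup \{+\infty\}$.

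\emph{Sufficiency.} Put $\alpha := \|\mu(M)\|_\infty^{-1}$ and let $\Delta \in \mathbf{\Delta}$ satisfy $\|\Delta\| < \alpha$. For every frequency we have $\|\Delta\| < \alpha \le \mu(M(j\omega))^{-1}$, and since $\mu(M(j\omega))^{-1}$ is the minimum of $\|\Delta'\|$ over all $\Delta' \in \mathbf{\Delta}$ with $\det(I - M(j\omega)\Delta') = 0$ (Equation~\ref{eq:ssv}), it follows that $\det(I - M(j\omega)\Delta) \neq 0$ for all $\omega$. Hence no closed-loop pole lies on the imaginary axis for any such $\Delta$. Running the homotopy $t \mapsto t\Delta$, $t \in [0,1]$, the poles start in the LHP at $t = 0$ and cannot cross the imaginary axis, so they remain in the LHP and the loop stays well-posed and stable. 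Therefore $\alpha_{max} \ge \|\mu(M)\|_\infty^{-1}$.

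\emph{Sharpness.} It remains to produce, for each $\alpha > \|\mu(M)\|_\infty^{-1}$, a destabilizing $\Delta$ with $\|\Delta\| < \alpha$. Because $M$ is proper and stable, $\omega \mapsto M(j\omega)$ is continuous on the compact set $\R \cup \{+\infty\}$, and $\mu(\cdot)$ is continuous on $\C^{n\times n}$; hence $\|\mu(M)\|_\infty = \mu(M(j\omega_0))$ for some $\omega_0$. (If this peak equals $0$ then $\alpha_{max} = +\infty$ and there is nothing to prove.) By Equation~\ref{eq:ssv} the minimum defining $\mu(M(j\omega_0))$ is achieved by some $\Delta_0 \in \mathbf{\Delta}$ with $\|\Delta_0\| = \mu(M(j\omega_0))^{-1} = \|\mu(M)\|_\infty^{-1} < \alpha$ and $\det(I - M(j\omega_0)\Delta_0) = 0$. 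This $\Delta_0$ places a closed-loop pole at $s = j\omega_0$ (or makes the loop ill-posed if $\omega_0 = +\infty$), so the feedback system fails to be stable and well-posed for all $f_i \in D(\alpha,\sigma)$. Thus $\alpha_{max} \le \|\mu(M)\|_\infty^{-1}$, and with the previous paragraph, equality holds.

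The main obstacle is the interplay between the open ball of perturbations ($|\delta_i| < \alpha$, strict) and the closed minimum in Equation~\ref{eq:ssv}: one must verify both that the peak of $\mu(M(j\omega))$ is actually attained --- which is why continuity of $\mu$ and properness of $M$ are needed --- and that the destabilizing $\Delta_0$ built at the peak frequency has norm \emph{strictly} below every $\alpha$ exceeding $\|\mu(M)\|_\infty^{-1}$, so that $\alpha_{max}$ is exactly, not merely approximately, $\|\mu(M)\|_\infty^{-1}$. A secondary technical point is the continuity-of-poles / ``no crossing'' argument for general proper $M$: the clean reduction $A_{cl} = A + B\Delta C$ used in the excerpt assumed zero feedthrough, so when $D \neq 0$ one instead works directly with the characteristic determinant $\det(I - M(s)\Delta)$ --- invoking well-posedness ($\det(I - D\Delta) \neq 0$) and a winding-number/Nyquist argument along $t\Delta$ --- to conclude that the count of unstable closed-loop poles cannot change unless the determinant vanishes somewhere on the imaginary axis or at infinity.
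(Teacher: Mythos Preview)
Your proposal is correct and follows essentially the same two-step structure as the paper's proof: use the definition of $\mu$ and a homotopy argument to show stability for all $\|\Delta\|<\|\mu(M)\|_\infty^{-1}$, then construct a destabilizing $\Delta_0$ at the peak frequency $\omega_0$ with $\|\Delta_0\|=\|\mu(M)\|_\infty^{-1}$ to obtain the reverse inequality. Your version is somewhat more careful than the paper's on two points the paper glosses over --- the attainment of the peak (via continuity of $\mu$ and compactness of the extended frequency axis) and the treatment of nonzero feedthrough --- but the underlying argument is the same.
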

\begin{proof}
  The proof consists of two steps.  First, it is shown that there is a
  destabilizing perturbation on the boundary of the disk
  $|\Delta| < \|\mu(M)\|_\infty^{-1}$.  Let $\omega_0$ be the
  frequency (possibly infinite) where $\mu(M(j\omega))$ achieves its
  peak. By definition, there is a perturbation $\Delta_0$ such that
  (i) $\det(I-M(j\omega_0)\Delta_0)=0$, and (ii)
  $\|\Delta_0\| =\|\mu(M)\|_\infty^{-1}$. The $M$-$\Delta$ system is
  either ill-posed ($\omega_0$ infinite) or unstable with an imaginary
  axis pole ($\omega_0$ finite).  Any open disk with radius larger
  than $\|\mu(M)\|_\infty^{-1}$ contains this destabilizing
  perturbation.  Hence the multi-loop disk margin is
  $\le \|\mu(M)\|_\infty^{-1}$.  

  Next, it is shown that the $M$-$\Delta$ feedback system is stable
  and well-posed for all perturbations in the interior of
  $\|\Delta\| < \|\mu(M)\|_\infty^{-1}$. It follows from the definition
  of $\mu$ that the $M$-$\Delta$ system is well-posed and has no
  imaginary axis poles for any perturbation
  $\|\Delta\| < \|\mu(M)\|_\infty^{-1}$. Hence the closed-loop is stable
  for all $\|\Delta\|< \|\mu(M)\|_\infty^{-1}$ because the poles do not cross
  the imaginary axis into the RHP. This can be formalized with a
  homotopy argument.
\end{proof}

Additional details on computing disk margins using the structured
singular value can be found in \cite{deodhare98,bates01}.  The
structured singular value can be used extend the results in this paper
for assessing robust stability and performance with more general
classes of parametric and dynamic uncertainty. The integral quadratic
constraint framework \cite{megretski97} is even more general and can
be used to assess the impact of nonlinearities.

\section{Conclusion}

This paper provided a tutorial introduction to disk margins. These are
robust stability measures that account for simultaneous gain and phase
perturbations in a feedback system. They can also be used to compute
frequency-dependent margins which provide additional insight into
potential robustness issues. Disk margins were also described for
multiple-loop analysis of MIMO systems.  This multiple-loop analysis
provides a more accurate robustness assessment than loop-at-a-time
analysis. These multiple-loop disk margins also provide an
introduction to more general robustness frameworks, e.g.  structured
singular value $\mu$ and integral quadratic constraints.

\section*{Acknowledgment}

The authors thank Christopher Mayhew, Raghu Venkataraman, and Brian
Douglas for helpful suggestions.  The authors also gratefully
acknowledge Brian Douglas for the creation of a tutorial video
corresponding to this paper. Finally, the authors  thank Seagate for
providing the hard disk drive frequency responses shown in
Figure~\ref{fig:GMPMLimitations}.


\bibliographystyle{IEEEtran}
\bibliography{DiskMarginRefs}


\sidebars 

\clearpage
\newpage

\section{Proof of Disk Margin Condition}

This appendix proves the main technical result used to compute disk
margins (Theorem~\ref{thm:edm}).  It is assumed for simplicity, that
$L$ has no feedthrough, i.e. $D=0$. The results require some minor
modifications for systems with non-zero feedthrough, e.g. to handle
well-posedness. First, the stability transition condition is stated as
a technical lemma with a formal proof using state-space
arguments.

\begin{lemma}
\label{lem:stabtrans}
Assume the closed-loop is stable for a nominal, SISO loop $L$.  In
addition, let $\omega_0$ be a given frequency and assume
$L(j\omega_0)\ne 0$.  There is a perturbation $f_0 \in D(\alpha,\sigma)$
that causes the closed-loop to have a pole at $s=j\omega_0$ if and
only if $\left( S(j\omega_0) + \frac{\sigma-1}{2} \right) \delta_0 = 1$
holds for some $|\delta_0| < \alpha$.
\end{lemma}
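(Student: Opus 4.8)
The plan is to first reduce the state-space pole condition to the scalar Nyquist-type equation $1+f_0L(j\omega_0)=0$, and then to carry out the M\"obius change of variables $f_0\leftrightarrow\delta_0$ explicitly. I would fix a minimal realization $(A,B,C,0)$ of $L$ and set $a(s):=\det(sI-A)$ and $b(s):=a(s)L(s)=C\operatorname{adj}(sI-A)B$, so that $L=b/a$ with $a,b$ coprime. Closing the (negative) feedback loop with a constant complex gain $f_0$ gives an interconnection that is automatically well-posed since $D=0$, with state matrix $A-f_0BC$. For $s\notin\operatorname{eig}(A)$ one has $\det(sI-A+f_0BC)=\det(sI-A)\,\bigl(1+f_0L(s)\bigr)=a(s)+f_0\,b(s)$, and since the outer expressions are polynomials agreeing off a finite set they agree for all $s$. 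Hence the closed loop has a pole at $s=j\omega_0$ if and only if $a(j\omega_0)+f_0\,b(j\omega_0)=0$; because $L(j\omega_0)\ne0$ and $a,b$ are coprime, $a(j\omega_0)$ and $b(j\omega_0)$ are not both zero, and this is equivalent to $1+f_0L(j\omega_0)=0$ --- with the understanding that an open-loop pole at $j\omega_0$ (i.e.\ $a(j\omega_0)=0$) forces $f_0=0$, which is exactly the case $S(j\omega_0)=0$.

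Second, I would carry out the algebra linking the two scalar equations. Writing $f_0=\frac{2+(1-\sigma)\delta_0}{2-(1+\sigma)\delta_0}$ with $|\delta_0|<\alpha$ and $\delta_0\ne\frac{2}{1+\sigma}$ (so $f_0$ is finite), clearing denominators in $1+f_0L(j\omega_0)=0$ gives
\[
2\bigl(1+L(j\omega_0)\bigr)=\delta_0\Bigl[(1+\sigma)-(1-\sigma)L(j\omega_0)\Bigr],
\]
and substituting $L(j\omega_0)=\frac{1-S(j\omega_0)}{S(j\omega_0)}$ and reducing both sides over $S(j\omega_0)$ collapses this to $\bigl(S(j\omega_0)+\frac{\sigma-1}{2}\bigr)\delta_0=1$, i.e.\ Equation~\ref{eq:SdmCondition}. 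Every step is reversible, and $f_0\mapsto\delta_0$ is a bijection between $D(\alpha,\sigma)$ and $\{\delta_0:|\delta_0|<\alpha\}\setminus\{\tfrac{2}{1+\sigma}\}$, so the two scalar equations are equivalent; together with the first paragraph this establishes the ``if and only if'' in both directions.

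Finally I would clean up the degenerate configurations. The hypothesis $L(j\omega_0)\ne0$ is genuinely used: if $L(j\omega_0)=0$ then no $f_0$ makes $1+f_0L(j\omega_0)=0$, whereas $S(j\omega_0)=1$ and the right-hand equation becomes $\frac{1+\sigma}{2}\delta_0=1$, solvable for $|\delta_0|<\alpha$ once $\alpha$ is large, so the equivalence would break. The same computation shows the excluded value $\delta_0=\frac{2}{1+\sigma}$ (corresponding to $f_0=\infty$) never solves $\bigl(S(j\omega_0)+\frac{\sigma-1}{2}\bigr)\delta_0=1$ when $L(j\omega_0)\ne0$, so dropping it is harmless, and the special skews $\sigma=1$ (the M\"obius map loses its pole) and $\sigma=-1$ (the map reduces to $f_0=1+\delta_0$) need no separate treatment. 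The step I expect to require the most care is the state-space reduction when $j\omega_0$ happens to be an open-loop pole of $L$: there the passage to $1+f_0L(j\omega_0)=0$ must be read through the characteristic-polynomial identity $a(s)+f_0 b(s)$ rather than through the transfer function directly. The remaining M\"obius manipulations, though slightly tedious, are routine.
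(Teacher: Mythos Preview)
Your proof is correct and follows essentially the same route as the paper: state-space determinant condition on $A-f_0BC$, a Sylvester/matrix-determinant-lemma reduction to a scalar equation, then the M\"obius substitution $f_0\leftrightarrow\delta_0$. The one tactical difference is that the paper factors $\det\bigl(j\omega_0 I - A + f_0BC\bigr)$ through the \emph{nominal closed-loop} matrix $j\omega_0 I - (A-BC)$, which is invertible by the stability hypothesis, landing directly on $1+(f_0-1)T(j\omega_0)=0$; you instead factor through the open-loop $sI-A$ and extend to all $s$ via the polynomial identity $a(s)+f_0\,b(s)$, which then obliges you to treat the case $j\omega_0\in\operatorname{eig}(A)$ separately. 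The paper's choice is marginally cleaner because it sidesteps that case automatically, but your discussion of the edge configurations (open-loop pole at $j\omega_0$, the excluded value $\delta_0=\tfrac{2}{1+\sigma}$, and why $L(j\omega_0)\ne0$ is genuinely needed) is more explicit than the paper's.
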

\begin{proof}
  Let $(A,B,C,D=0)$ denote a state-space representation for then
  nominal loop $L$. Let $T_f$ denote the transfer function from
  reference $r$ to output $y$ for the perturbed feedback system in
  Figure~\ref{fig:gpmfb}, i.e. the complementary sensitivity
  function. The notation $T$ with no subscript will refer to the
  nominal complementary sensitivity with $f=1$.
  
  A state-space realization for the perturbed $T_f$ is given by
  $(A-fBC,B,C,0)$.  Hence the condition for some $f_0 \in D(\alpha,\sigma)$
  to cause a closed-loop pole at $s=j\omega_0$ is:
  \begin{align}
    \label{eq:detcond1}
    \begin{split}
      0 & =\det\left( j\omega_0 I - (A-f_0BC) \right) \\
      & = \det\left( j\omega_0 I-(A-BC) + (f_0-1)BC \right).
      \end{split}
  \end{align}
  The second equality simply groups the state matrix $(A-BC)$ for the
  nominal closed-loop with $f=1$.  The nominal closed-loop is assumed
  to be stable and thus $j\omega_0 I - (A-BC)$ is nonsingular.  Hence
  the Equation~\ref{eq:detcond1} is equivalent to:
  \begin{align}
    \label{eq:detcond2}
    0 = \det\left( I + 
         (f_0-1) \, \left(j\omega_0 I-(A-BC)\right)^{-1} BC \right).
  \end{align}
  Finally, apply Sylvester's determinant identity (Corollary 3.9.5 in
  \cite{bernstein18}) to shift around $C$ and obtain:
  \begin{align}
    \label{eq:detcond3}
    0 = 1 + (f_0-1) \, C (j\omega_0 I - (A-BC))^{-1} B  
      = 1 + (f_0-1) \, T(j\omega_0).
  \end{align}
  (As an aside, note that $T=\frac{L}{1+L}$ and hence
  Equation~\ref{eq:detcond3} is equivalent to
  $1+f_0 L(j\omega_0) = 0$.)  The perturbation can be expressed as
  $f_0=\frac{2+ (1-\sigma)\delta_0 }{2- (1+\sigma)\delta_0}$ for some
  $\delta_0\in \C$ with $|\delta_0|<\alpha$
  (Equation~\ref{eq:Fe}). Thus Equation~\ref{eq:detcond3} can be
  re-written, after some algebra, in terms of the nominal sensitivity
  $S:=\frac{1}{1+L}$ as follows:
  \begin{align}
    \left( S(j\omega_0) + \frac{\sigma-1}{2} \right) \delta_0 = 1.
  \end{align}
  This final step requires the assumption that $L(j\omega_0)\ne 0$.
  This ensures $S(j\omega_0) \ne 1$ and $\delta_0 \ne \frac{1+\sigma}{2}$
  so that the corresponding perturbation $f_0$ is finite.
\end{proof}


The main disk margin condition (Theorem~\ref{thm:edm}) is restated
below with a formal proof. This is a variation of a technical result
known as the small gain theorem \cite{zhou96,dullerud00,skogestad05}.

\newtheorem*{thm:edm}{Theorem \ref{thm:edm}}
\begin{thm:edm}[Restated]
  Let $\sigma$ be a given skew parameter defining the disk margin.
  Assume the closed-loop is well-posed and stable with the nominal,
  SISO loop $L$. Then the disk margin is given by:
\begin{align*}
  \alpha_{max} = \frac{1}{\left\| S + \frac{\sigma-1}{2} \right\|_\infty}
  \tag{\ref{eq:alphadm}, Restated}
\end{align*}
\end{thm:edm}
\begin{proof}
  Define $\alpha_0:=\|S + \frac{\sigma-1}{2}\|_\infty^{-1}$.  The proof
  consists of two steps. First, it is shown that there is a
  destabilizing perturbation on the boundary of $D(\alpha_0,\sigma)$.  The
  perturbation set $D(\alpha,\sigma)$ contains this destabilizing
  perturbation for any value $\alpha \ge \alpha_0$. Hence the disk
  margin satisfies $\alpha_{max} \le \alpha_0$.  Second, it is shown
  that the closed-loop is stable and well-posed for all perturbations
  $f \in D( \alpha_0, \sigma)$.  It follows from these two steps that
  $\alpha_{max} = \alpha_0$.

  For the first step, let $\omega_0$ be the frequency where
  $S+\frac{\sigma-1}{2}$ achieves its peak gain. Define the perturbation
  $\delta_0 := \left( S(j\omega_0) +\frac{\sigma-1}{2} \right)^{-1} \in
  \C$. By construction
  $\left( S(j\omega_0) + \frac{\sigma-1}{2} \right) \delta_0 = 1$ and
  hence, by Lemma~\ref{lem:stabtrans}, the corresponding $f_0$ places
  a closed-loop pole at $s=j\omega_0$.  Moreover,
  $|\delta_0|:=\alpha_0$ and hence the corresponding $f_0$ is on the
  boundary of $D(\alpha,\sigma)$. One technical detail arises if
  $L(j\omega_0)=0$.  In this case the boundary perturbation
  $\delta_0 = \frac{2}{1+\sigma}$ yields $f_0=\infty$.  This corresponds to
  the trivial case where $D(\alpha_{max},\sigma)$ is a half-space and the
  closed-loop retains stability for any perturbation in this half
  space.

  Next show the closed-loop is stable and well-posed for all
  perturbations $f\in D(\alpha_0,\sigma)$. Each such perturbation can be
  expressed as $f=\frac{2+ (1-\sigma)\delta }{2- (1+\sigma)\delta}$ for some
  $|\delta| < \alpha_0$.  The bound $|\delta|<\alpha_0$ implies that
  $\left( S(j\omega) + \frac{\sigma-1}{2} \right) \delta \ne 1$ for all
  $\omega$.  It follows, again by Lemma~\ref{lem:stabtrans}, that the
  closed-loop has no poles on the imaginary axis for any
  $f \in D(\alpha_0,\sigma)$. Hence the closed-loop is stable for all
  $f \in D(\alpha_0,\sigma)$ because the poles for the nominal system are
  in the LHP and they do not cross the imaginary axis into the
  RHP. This can be formalized with a homotopy argument and proof by
  contradiction. Specifically, suppose the closed-loop has a pole in
  the RHP for some $f_0\in D(\alpha_0,\sigma)$. Consider the following
  equation parameterized by $0\le \tau \le 1$:
  \begin{align}
    \label{eq:HomotopyEq}
    0 & =\det\left( s I - \left( A- f(\tau) BC \right) \right) 
        \,\,\, \mbox{ where } f(\tau):= 1 + \tau(f_0-1).
  \end{align}
  For each value of $\tau$ this is a polynomial in $s$ whose roots
  correspond to the poles of the closed-loop with perturbation
  $f(\tau)$.  For $\tau=0$, this corresponds the nominal feedback
  system ($f=1$) and all roots are in the LHP by assumption.  For
  $\tau=1$, this corresponds to the perturbed feedback system $f_0$
  and there is a root in the RHP by assumption. Note that $f(\tau)$
  remains in the disk $D(\alpha_0,\sigma)$ for all $0\le \tau \le 1$.  The
  roots of a polynomial equation are continuous functions of the
  coefficients.  Hence there must be some $\tau \in [0,1]$ for which
  Equation~\ref{eq:HomotopyEq} has a root on the imaginary axis. This
  implies that the closed-loop with perturbation
  $f(\tau) \in D(\alpha_0,\sigma)$ has a pole on the imaginary
  axis. However, it has been shown that no perturbation can cause the
  closed-loop to have roots on the imaginary axis.  Thus the original
  assumption that $f_0$ causes a RHP root is false. In other words,
  the poles of the closed-loop must remain in the LHP for all
  perturbations in $D(\alpha,\sigma)$.
\end{proof}

\section{Linear Time Invariant (LTI) Perturbations}

The main disk margin result (Theorem~\ref{thm:edm}) provides a
construction for a destabilizing perturbation $f_0$. This perturbation
is a complex number with simultaneous gain and phase variation.  The
perturbation can be equivalently represented as an LTI system with
real coefficients. This equivalence is based on the following
technical lemma.

\begin{lemma}
  \label{lem:LTIinterp}
  Let a finite frequency $\omega_0>0$ and a complex number
  $\delta_0\in \C$ be given. There exists a stable, LTI system
  $\hat\delta_0$  such that $\hat\delta_0(j\omega_0)=\delta_0$ and
  $\| \hat\delta_0 \|_\infty \le |\delta_0|$.
\end{lemma}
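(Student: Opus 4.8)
The plan is to construct $\hat\delta_0$ explicitly as a first-order (or zeroth-order, in degenerate cases) rational function with real coefficients, chosen so that its Nyquist plot is a circle through the origin passing through the prescribed value $\delta_0$ at $s=j\omega_0$, with $H_\infty$ norm exactly $|\delta_0|$. Write $\delta_0 = |\delta_0| e^{j\theta}$ for some $\theta \in (-\pi,\pi]$. If $\delta_0 = 0$, take $\hat\delta_0 = 0$ and there is nothing to prove; if $\delta_0$ is real and positive, take the constant system $\hat\delta_0 = \delta_0$; if $\delta_0$ is real and negative, one needs an all-pass-type factor, handled as a special case of what follows. For the generic case I would use the scaled first-order all-pass $\hat\delta_0(s) = |\delta_0| \,\frac{s - a}{s + a}$ for a suitable real $a > 0$. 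This system is stable, has real coefficients, and satisfies $|\hat\delta_0(j\omega)| = |\delta_0|$ for \emph{all} $\omega$, so $\|\hat\delta_0\|_\infty = |\delta_0|$ automatically. Its phase at $j\omega_0$ is $\angle\frac{j\omega_0 - a}{j\omega_0 + a} = \pi - 2\arctan(\omega_0/a)$, which sweeps the full interval $(0,\pi)$ as $a$ ranges over $(0,\infty)$ (and the reflection to $(-\pi,0)$ is obtained by using $\frac{a-s}{a+s}$, i.e. flipping the sign, equivalently allowing the numerator gain to be $-|\delta_0|$). Choosing $a$ to match the required phase $\theta$ gives $\hat\delta_0(j\omega_0) = \delta_0$ exactly.

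The key steps, in order, are: (i) reduce to the generic case by dispatching $\delta_0 = 0$ and $\delta_0 \in \R$ separately; (ii) write the ansatz $\hat\delta_0(s) = \kappa\,\frac{a-s}{a+s}$ with $\kappa = \pm|\delta_0| \in \R$ and $a>0$ to be determined; (iii) observe that this is stable with real coefficients and that $|\hat\delta_0(j\omega)| \equiv |\kappa| = |\delta_0|$, which already gives $\|\hat\delta_0\|_\infty = |\delta_0|$; (iv) compute the phase of $\frac{a-j\omega_0}{a+j\omega_0}$ as a function of $a$ and verify it is a continuous bijection from $(0,\infty)$ onto $(-\pi,0)$, so that together with the sign choice of $\kappa$ every target phase $\theta$ is attainable; (v) solve for the specific $a$ — explicitly $a = \omega_0 \cot(\,(\pi-|\theta|)/2\,)$ when $\theta>0$ with $\kappa=-|\delta_0|$, and the mirror formula for $\theta<0$ — and conclude $\hat\delta_0(j\omega_0) = \delta_0$.

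I do not expect a serious obstacle here; the lemma is essentially a one-line observation that a scaled first-order all-pass can realize any single complex sample on a circle of radius $|\delta_0|$ while keeping the norm equal to that radius. The only mildly delicate point is the phase-matching bookkeeping: confirming that the map $a \mapsto \angle\frac{a-j\omega_0}{a+j\omega_0}$ is a bijection onto the correct open half-interval and that the two sign choices $\kappa = \pm|\delta_0|$ then cover all of $(-\pi,\pi]$ except possibly an endpoint (the value $\theta = \pi$, i.e. $\delta_0$ real negative, corresponds to the limit $a \to 0^+$ and is instead realized by a pure gain $-|\delta_0|$, which is also a legitimate stable real LTI system of norm $|\delta_0|$). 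Once those cases are enumerated, the verification of each claimed property of $\hat\delta_0$ is immediate.
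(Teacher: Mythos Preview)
Your proposal is correct and follows essentially the same route as the paper: both construct $\hat\delta_0$ as a scaled first-order all-pass $\pm|\delta_0|\,\frac{s-\beta}{s+\beta}$, dispatch the real case with a constant system, and solve for the pole parameter by matching the phase at $\omega_0$. The only cosmetic differences are that the paper writes the phase as $2\tan^{-1}(\beta/\omega_0)$ rather than your equivalent $\pi-2\arctan(\omega_0/a)$, and it packages the sign choice via the decomposition $\delta_0=\pm c e^{j\phi}$ with $\phi\in(0,\pi)$ instead of splitting on the sign of $\theta$.
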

\begin{proof}
  The basic idea is that if $\beta>0$ then
  $H(s):=\frac{s-\beta}{s+\beta}$ is stable with magnitude
  $|H(j\omega)|=1$ for all $\omega$.  This is called an all-pass
  system. Moreover, the phase of $H$ goes from $180^o$ down to $0^o$
  with increasing frequency. Similarly, $-H(s)$ is stable, all-pass
  and has phase that goes from $360^o$ up to $180^o$.  Thus a transfer
  function of the form $\pm c \frac{s-\beta}{s+\beta}$ where $c>0$ can
  achieve any desired magnitude and phase at a given frequency. The
  remainder of the proof provides details for the construction.

  If $\delta_0\in \R$ then simply select the (constant) system
  $\hat\delta_0:=\delta_0$.  Consider the alternative where
  $Im\{ \delta_0 \}\ne 0$.  In this case, $\delta_0 = \pm c e^{j\phi}$
  for some $c > 0$ and $\phi \in (0,\pi)$. Specifically, if
  $Im\{\delta_0\}>0$ then $c e^{j\phi}$ is the polar form for
  $\delta_0$.  If $Im\{\delta_0\}<0$ then it has phase
  $\angle \delta_0 \in (-\pi,0)$.  Hence $\angle \delta_0 = \phi-\pi$
  for some $\phi \in (0,\pi)$ and $\delta_0$ has the polar form
  $c e^{j(\phi-\pi)} = -c e^{j\phi}$.  

  Next, note that for $\beta>0$ the phase of
  $H(s)=\frac{s-\beta}{s+\beta}$ is given by:
  \begin{align*}
    \angle H(j\omega) 
         & = \angle ( j\omega - \beta ) - \angle ( j\omega + \beta ) \\
         & = \left[ \frac{\pi}{2} 
             + \tan^{-1} \left(\frac{\beta}{\omega} \right) \right]
              -\left[ \frac{\pi}{2}
             - \tan^{-1} \left( \frac{\beta}{\omega} \right) \right] 
          = 2 \tan^{-1} \left( \frac{\beta}{\omega} \right).
  \end{align*}
  As mentioned above, the phase of $H$ goes from $\pi$ rads down to
  $0$ as the frequency increases.  Thus $\beta$ can be selected to
  achieve the phase $\phi \in (0,\pi)$ at the specified frequency
  $\omega_0$.  Select $\beta = \omega_0 \tan\left( \phi/2 \right)$
  so that $H(j\omega_0) = e^{j\phi}$.  Finally, define
  $\hat\delta_0(s) := \pm c \frac{s-\beta}{s+\beta}$ with this $\beta$ and
  the appropriate sign for $\pm c$. Then $\hat\delta_0$ is stable with
  $\hat\delta_0(j\omega_0)=\delta_0$ and $\| \hat\delta_0 \|_\infty =|\delta_0|$.
\end{proof}

This technical lemma can be applied to obtain LTI destabilizing
perturbation from the disk margin analysis. Let $f_0$ denote a
destabilizing complex perturbation in $D(\alpha_{max},\sigma)$ with critical
frequency $\omega_0$.  This destabilizing perturbation is constructed
from a corresponding $\delta_0\in \C$ with $|\delta_0|=\alpha_{max}$.
By Lemma~\ref{lem:LTIinterp}, if $\omega_0$ is finite and nonzero then
there is a stable LTI system $\hat \delta_0$ such that
$\hat \delta_0(j\omega_0) = \delta_0$.  If $\omega_0=0$ or $\infty$
then $\delta_0$ will be real and a constant system can be selected,
i.e.  $\hat \delta_0 =\delta_0$. In either case the dynamic
perturbation $\hat{\delta}_0$ can be chosen as a constant or
first-order.  In addition, the dynamic perturbation has norm no larger
than the given uncertainty, i.e.
$\|\hat \delta_0\|_\infty \le |\delta_0| = \alpha_{max}$.  Finally,
define the following LTI perturbation:
\begin{align}
  \label{eq:fhat0}
  \hat{f}_0 = \frac{2+(1-\sigma)\hat{\delta}_0}{2-(1+\sigma)\hat{\delta}_0}.
\end{align}
This perturbation $\hat{f}_0$ is stable and on the boundary of
$D(\alpha_{max},\sigma)$ for all frequencies. The system $\hat \delta_0$
has at most one state and a minimal realization of $\hat{f}_0$ will
also have at most one state. Moreover, $\hat{f}_0(j\omega_0) = f_0$
and hence $\hat{f}_0(j\omega_0)$ causes the closed-loop to be unstable
with a pole at $s=j\omega_0$.  The LTI perturbation $\hat f_0$ can
be used within higher fidelity nonlinear simulations to gain further
insight.

\begin{ex}
  \label{ex:LTIpert}
  The symmetric disk margin was computed in Example~\ref{ex:edm} for
  the loop $L(s)=\frac{25}{s^3+10s^2+10s+10}$.  The disk margin is
  $\alpha_{max} = 0.46$ with critical frequency $\omega_0=1.94$
  rad/sec. In addition, the destabilizing perturbation
  $f_0 = 1.128-0.483j$ was constructed from
  $\delta_0 = 0.212 - 0.406j$.  The complex number $\delta_0$ has
  magnitude $0.458$ and phase $-1.089$rads.  Hence it can be expressed
  as $\delta_0 = -c e^{j\phi}$ with $c=0.458$ and $\phi = 2.052$rads.
  Select $\beta = \omega_0 \tan\left( \phi/2 \right) = 3.226$.  Based
  on the proof for Lemma~\ref{lem:LTIinterp}, the first order system
  $\hat\delta_0(s):= -0.458\, \frac{s-3.226}{s+3.226}$ is stable with
  $\hat\delta_0(j\omega_0)=\delta_0$ and
  $\| \hat\delta_0 \|_\infty =\alpha_{max}$. Equation~\ref{eq:fhat0}
  with $\sigma=0$ yields the LTI perturbation
  $\hat{f}_0 = \frac{0.627 s + 3.226}{s+2.2024}$.  It can be verified
  that $\hat f_0(j\omega_0)=f_0$ and hence the
  perturbed closed-loop sensitivity $S:=\frac{1}{1+ \hat f_0 L}$
  is unstable with a pole on the imaginary axis at $s=j\omega_0$.

\end{ex}

\newpage


\end{document}